\newcommand{\tr}{\text{Tr}}
\theoremstyle{plain}  
\newtheorem{thm}{Theorem}[section]
\theoremstyle{remark}
\date{}
\title{HERMITIAN AND UNITARY ALMOST-COMPANION MATRICES OF POLYNOMIALS ON DEMAND}
\author{L.A. Markovich$^{1,2,3,4*}$, A. Migliore$^{5}$ and A. Messina$^{6}$\\
      $^1$Instituut-Lorentz, Universiteit Leiden, P.O. Box 9506,\\ 2300 RA Leiden, The Netherlands\\
 $^2$QuTech and Kavli Institute of Nanoscience, Delft University of Technology,\\ 2628 CJ, Delft, The Netherlands\\
$^3$Institute for information transmission problems, Moscow,\\ Bolshoy Karetny per. 19, build. 1, Moscow 127051, Russia\\
$^4$Russian Quantum Center, Skolkovo, Moscow 143025, Russia\\
$^5$Department of Chemical Sciences, University of Padova,\\ Via Marzolo 1, 35131 Padova, Italy\\
$^6$Dipartimento di Matematica ed Informatica dell'Universita di Palermo,\\ Via Archirafi 34, 90123 Palermo, Italy\\
$^*$Corresponding author e-mail: markovich@mail.lorentz.leidenuniv.nl}
\date{}
\begin{document}
\maketitle
\pagenumbering{arabic}
\begin{abstract}\noindent

We introduce the concept of Almost-Companion Matrix (ACM) by relaxing the non-derogatory property of the standard Companion Matrix (CM). That is, we define an ACM  as a matrix whose characteristic polynomial coincides with a given monic and generally complex polynomial. The greater flexibility inherent in the ACM concept, compared to CM, allows the construction of ACMs that have convenient matrix structures satisfying desired additional conditions, compatibly with specific properties of the polynomial coefficients. We demonstrate the construction of Hermitian and Unitary ACMs starting from appropriate third degree polynomials, with implications for their use in physical-mathematical problems such as the parameterization of the Hamiltonian, density, or evolution matrix of a qutrit. We show that the ACM provides a means of identifying properties of a given polynomial and finding its roots. For example, we describe the ACM-based solution of cubic complex algebraic equations without resorting to the use of the Cardano-Dal Ferro formulas. We also show the necessary and sufficient conditions on the coefficients of a polynomial for it to represent the characteristic polynomial of a unitary ACM. The presented approach can be generalized to complex polynomials of higher degree.
\end{abstract}
\medskip

\noindent{\bf Keywords: Companion matrix; almost-companion matrix; Hermitian matrix; unitary matrix; complex polynomial; density matrix;  sub-parameterization }

\section{\label{sec:level1}Introduction}
\par 
Given a complex and monic\footnote{A monic polynomial is a univariate polynomial in which the leading coefficient  is equal to 1.} polynomial $P_n(z)$, it is always possible to define a matrix with a specified arrangement of the polynomial coefficients as its entries, such that $P_n(z)$ coincides with  the characteristic polynomial of the matrix. The set $S(P_n(z))$ of all these $n\times n$ matrices with complex entries, sharing the same characteristic  polynomial $P_n(z)$, is infinite and can include both derogatory and nonderogatory\footnote{A square matrix is called derogatory (nonderogatory) when the degree of its minimal polynomial is less than (equal to) the order $n$ of the matrix. The term derogatory has been coined by Sylvester in the early years following 1880. Etymologically, it probably originates from the Latin verb \textit{"derogare"}, in its particular meaning of "decrease", to underline that the degree of the minimal polynomial of the matrix is less than $n$.} matrices.
We observe, in fact, that, by definition, the \textit{Frobenius Matrix} \cite{Frobenius} of the shared characteristic polynomial always belongs to $S(P_n(z))$.  
A remarkable property of this matrix, which stems directly from its construction, is the coincidence between its characteristic and minimal polynomials, whatever $P_n(z)$. 
Therefore this  matrix is classified as nonderogatory and, following  Horn and Johnson \cite{Horn}, it is known as the \textit{Companion Matrix} (CM)\footnote{As reported by Hawkins \cite{Hawkins}, the term \textit{companion matrix} was coined by Loewy in 1917 \cite{Loewy_1917}. In 1946 MacDuffee introduced the term "companion matrix" as a translation from the German "Begleitmatrix".} of its characteristic or minimal polynomial.
\par  
Henceforth, we refer to the Frobenius Matrix as the Frobenius Companion matrix ($FCM$) of $P_n(z)$.
When the algebraic multiplicity of each of the $n$ eigenvalues of the $FCM$ is 1, any matrix in $S(P_n(z))$ is nonderogatory, since these $n$ distinct eigenvalues are all necessarily roots of its  minimal polynomial, which therefore has degree $n$ \cite{Horn}.
We also remark that  this condition guarantees that each matrix in the $S(P_n(z))$ set is diagonalizable \cite{Horn} and that, consequently, all matrices $\in S(P_n(z))$ can be generated from the $FCM$ $\in S(P_n(z))$ by means of a similarity transformation. In this way, by definition and under the conditions established for the spectrum $\sigma(FCM)$ of $FMC$, $S(P_n(z))$ includes all and only the companion matrices of $P_n(z)$. 

When, instead, the distinct roots of the common characteristic polynomial are $p< n$, $S(P_n(z))$ includes infinite derogatory matrices, which cannot be structurally similar to the Frobenius matrix\footnote{We point out that, in this case, the Frobenius matrix is no longer diagonalizable, since, being nonderogatory by construction, it necessarily has at least one eigenvalue with geometric multiplicity smaller than the (algebraic) multiplicity.} and to any nonderogatory matrix belonging to $S(P_n(z))$. For example, consider that the set $S(P_n(z))$ contains the diagonal matrices whose $n$ entries are nothing but the $p$ distinct roots of the characteristic polynomial repeated as many times as their multiplicities, whose sum is $n$. The degree of their characteristic polynomial is $n$, while the degree of their minimal polynomial is $p<n$ \cite{Horn}. Therefore, these matrices are derogatory, as are the infinite matrices generated from them by similarity.

The Frobenius matrix dates back to 1879 and was given in the form \cite{Frobenius}:
 \begin{eqnarray}C_n=\left(
                        \begin{array}{ccccc}
                          0 & 0 & \ldots & 0 & -c_{n-1} \\
                          1 & 0 & \ldots & 0 & -c_{n-2} \\
                          0 & 1 & \ldots & 0 & -c_{n-3} \\
                          \vdots & \vdots & \ddots & \vdots & \vdots \\
                          0 & 0 & \ldots & 1 & -c_{1} \\
                        \end{array}
                      \right)\label{2_1}
\end{eqnarray}
by the German mathematician Ferdinand Georg Frobenius \cite{Hawkins}. Sometimes, in the literature it is presented in three other unitarily transformed forms, all parameterized in terms of the $n$ coefficients of the polynomial $P_n(z)$ and with the same number of entries equal to 0 or 1 as in \eqref{2_1} \cite{Barnett_2}. When the $n$ eigenvalues of the Frobenius matrix are distinct, the unitary matrix that yields the diagonal form of \eqref{2_1} is the Vandermont matrix of its $n$ eigenvalues. This property, as well as other properties and some applications of the Frobenius matrix may be found in \cite{Barnett_2}. 
\par Despite the fact that the first CM  was proposed more than 140 years ago, the  generation of different CMs (with further properties) has attracted a great deal of application research activity. CMs emerge naturally in mathematical methods for finding and characterizing the  roots of polynomials \cite{doi:10.1137/120865392,DETERAN2014197,bini1996numerical,HIGHAM20035,bueno2011recovery,antoniou2004new} and can be applied as well in the determination
of solutions of high-order scalar linear differential and difference equations \cite{brand1968applications}. CMs 
can give matricial representations of some fields \cite{wardlaw1994matrix} and are widely used in control theory, for example in writing the controllable canonical form associated with the transfer function of a system \cite{szederkenyi2006intelligent}. 
 The product of CMs is also used in the study of random walks and Markov chains \cite{LIM20112921}.
\par The original structure of the FCM shows a modest level of flexibility and, in fact, has stimulated the search and the emergence of generalizations leading to new proposals of CMs that pave the way to applications beyond the FCM. To this end, a successful strategy is based on a radical change of the basis in which the characteristic  polynomial $P_n(z)$ of the FCM is represented.
By definition, $P_n(z)$ is monic and is written as the sum of $z^n$ and a linear combination with coefficients in $\mathbb{C}$ of $1,z,z^2,\dots, z^{n-1}$. All these $n+1$ powers of $z$  constitute the  monomial basis which can be replaced by other polynomial bases. In this way, one can introduce new (still nonderogatory) CMs with nonvanishing elements on the main, sub and super diagonals.   
For example, the Chebyshev basis, independently adopted  by Specht \cite{Specht_3,Specht_4} and Good \cite{Good_1961}, led to a new CM called the \textit{colleague matrix}. This approach has been further generalized by Barnett \cite{Barnett_2}, who considered a basis of orthogonal polynomials and named the newly emerged CMs \textit{comrade matrices}. Subsequently, Barnett proposed to call \textit{confederate matrices} the CMs arising from the use of a general polynomial basis \cite{maroulas}.
\par The applications dedicated to the classic problem of finding the real zeros of a real coefficient polynomial of arbitrary degree deserves a special mention, because in this context the CMs have inspired a different approach, alongside exquisitely mathematical and computational investigations \cite{Branden,HAGLUND20001017,PITMAN1997279,WAGNER1991138,WAGNER1992459,WANG200563}. In the last decade, new quantum theory-based root-finding algorithms exploiting the construction of Hermitian companion matrices \cite{7016940,Nagata2018,Nagata2019,Tansuwannon2019,TAN200275,Weigert_2003} have also been proposed, thus increasing the interest in the study of CMs and related matrices in the rapidly growing research field of quantum computing.
\par The above is the general context for the main question addressed by this paper: is it possible to find a CM of a real or complex polynomial that is also hermitian/unitary, for example, or possesses some other prescribed special matrix structure? 
 This question has so far been answered only partially. 
 \par In \cite{SCHMEISSER199311} the CM for a $P_n(z)$ with real coefficients and real zeros is  constructed  as a real symmetric tridiagonal Hermitian matrix. This provides a complete solution to a problem raised and partly solved by M. Fiedler in \cite{FIEDLER1990265}, which has rekindled the interest in the general structure of CMs. Note that the Frobenius matrix is itself a Fiedler matrix after a reverse permutation matrix similarity. 
In \cite{EASTMAN2014255} all CMs are characterized in terms of combinatorial structure to generate new CMs. It is interesting to note that both the Frobenius and Fiedler CMs are sparse matrices, as they have $2n-1$ nonzero elements \cite{deaett2019non}. A new class of sparse CMs, also known as intercyclic CMs, was introduced in \cite{EASTMAN2014255} and  includes the Fiedler matrices as a special case.
 In \cite{deaett2019non} the non-sparse CMs are introduced noting that they are not connected with the sparse ones by a reverse permutation matrix similarity.
\par  To the best of our knowledge, the question of whether the CM of a complex polynomial can be sought as unitary or Hermitian is still open.

\subsection{Purpose and contribution of this study}
\par  CMs generally show relatively limited versatility due to their combinatorial structure.  For example, the FCM is never Hermitian or unitary. Searching for CMs that satisfy additional constrains of this kind is important when a given class of parametric polynomials is designed to reach a reliable theoretical control of a quantum  physical scenario.
Finding exact flexible solutions of well-defined inverse problems of this kind is a target of the present study. 
We stress that, for application purposes, we often do not need to associate nonderogatory matrices to a given polynomial. 
To emphasize this peculiar aspect of our matrix construction, we introduce the term \textit{Almost-Companion Matrix} (ACM) to refer to matrices that have a given polynomial as their characteristic polynomial but can be derogatory. Clearly, every CM is also an ACM, that is, the set of all ACMs is a superset of the set of all CMs of a given polynomial.
\par In short, in this study we address the following inverse problem: given a real or complex polynomial of any kind (for example, it may belong to a special class of polynomials, which is reflected in some special condition satisfied by its coefficients), we find a parametric ACM\footnote{By parametric we mean that the entries of the ACM can be functions of the polynomial coefficients.} that satisfies preassigned conditions (e.g., those required to be Hermitian or unitary) and whose characteristic polynomial coincides with the given one. For definiteness, our investigation is here limited to complex polynomials of third degree ($n= 3$), and its extension to higher degree polynomials is discussed. 
\par  The solution of the inverse problem outlined above for polynomials of order $n= 3$ is accompanied by some useful applications.
The relaxed constraints that characterize an ACM, as compared with a CM, allow the freedom to search, from the outset, for a trial ACM of a given $P_n(z)$  that is a Hermitian, unitary, or positive matrix, for example. If our inverse problem can be solved systematically through an approach that finds such ACMs whatever the given $P_n(z)$, then we readily have at our disposal a good platform for successful applications to problems such as those mentioned below.

\subsection{Physical applications}
Constructing an ACM of a given generally complex polynomial, in addition to being interesting in itself, also has considerable applications. In elementary algebra, for example, it could be a solution tool for counting the number of the real roots (and consequently that of the complex roots in the case of a complex polynomial) of a real or complex polynomial. In addition, it may help determine a (or the only) real root of a real polynomial of odd degree. In quantum mechanics or quantum information, it could provide new parametric representations of the density operator or the evolution operator of a physical system living in a finite dimensional Hilbert space.
\par The results found for a generic complex polynomial can be applied to the important particular case of a real polynomial. Investigating such a link is certainly of interest in Physics. For example, in classical physics, cubic real polynomials appear when looking for the principal axes of symmetric Cartesian tensors of rank two, such as inertia or magnetic/electric dipolar tensors \cite{borisenko1968vector}. In quantum mechanics they enter the scene as characteristic polynomials of any observable of a physical system that lives in a three-dimensional Hilbert space, such as, for example, a three-level atom or a qutrit.
Recipes for constructing an ACM of a cubic polynomial possessing real roots after the appropriate assignment of parametric real coefficients could provide an easy way to build, e.g.,  Hamiltonian qutrit models on demand for control purposes, or even the density operator describing a mixed state of a three-level atom.
\par The paper is organized as follows. In Sec.~\ref{inverse problem} we formulate the inverse problem consisting in the search of the ACM for a generic complex polynomial. In Sec.~\ref{sec:cubic} we construct the ACMs for a generic cubic complex polynomial.
Through these ACMs, we introduce a way to find the roots of the given polynomial without  using the Cardano-Dal Ferro formulas.
The case of the polynomial with real coefficients is also discussed in detail. In Sec.~\ref{densitymatrix} we present an application in quantum mechanics, constructing on demand the density matrix of a qutrit system  as an ACM. 
Sec.~\ref{unitarymatrix} shows the construction on demand of the unitary ACM of a qutrit. 
Possible extensions to higher degree polynomials, as well as further possible applications, are discussed in Sec.~\ref{discussion_conclusion}.

\section{Formulation of the Inverse Problem}
Consider a  matrix $A\in M_n$, where $M_n$ is the set of all $n\times n$ matrices over the complex field $\mathbb{C}$. Denoting  $I_n\in M_n$ the identity matrix, the monic polynomial in the complex variable $z$ 
\begin{eqnarray}\det{(zI_n-A)}&=&z^n +c_1z^{n-1}+\cdots c_{n-1}z +c_{n}\label{n0}\end{eqnarray}
is, by definition,  the characteristic polynomial of $A$ and belongs to the set $\mathbb{P}{_n}[\mathbb{C}]$ of all complex monic polynomials of degree $n$.

 It is well known that its $n$ coefficients $c_k$,  $k=1,2\dots,n$ contain information about the elements of $A$ that is invariant under arbitrary similarity transformations\footnote{That is, under any transformation of the form $A \longrightarrow P^{-1}AP$, where $P$ is a non singular matrix.}. In fact, $(-1)^k c_k$ is the sum of all the principal minors of order $k$ of $A$. In particular, $c_1=-\tr{A}$ and $c_n=(-1)^n \det{A}$.  
The profound interrelationship between a matrix  and its characteristic polynomial becomes even more surprising considering  the Cayley-Hamilton  theorem\footnote{The Cayley-Hamilton theorem states that every square matrix over a commutative ring (such as the real or complex numbers, or the integers) satisfies its own characteristic equation.} \cite{Horn} and/or the  Newton's identities \cite{kalman2000matrix}, which reveal the existence of finite algebraic  expressions for the coefficients of the characteristic polynomial of a matrix in terms of traces of powers (up to $n$) of the matrix. \cite{prasolov1994problems,boas2006mathematical}.

The function $C:\mathbb{M}_{n}\to\mathbb{P}_{n}[\mathbb{C}]$ is surjective
but not injective, and hence it cannot be inverted.  In fact, it is easy to convince oneself that, for any element $P_n(z)\in\mathbb{P}{_n}[\mathbb{C}]$, $C^{-1}({P}_n(z))$ is indeed an infinite subset of $\mathbb{M}_n$, since by definition it consists of all and only the matrices belonging to $M_n$ whose characteristic polynomial is ${P}_n(z)$.
\par Thus, while the direct or forward  problem of finding the characteristic polynomial of a given $n\times n$ matrix is certainly well-posed according to Hadamard \cite{hadamard1902problemes}, conversely,  the problem of finding a matrix $A\in \mathbb{M}_n$ generating a given complex polynomial $P_n(z)$ is an ill-posed inverse problem \cite{kabanikhin2008definitions,von2011ill}, as it manifestly violates Hadamard's uniqueness requirement, considering that every element $\in C^{-1}({P}_n(z))$ is a solution to the problem. 

It is possible to overcome such an ill-posedness by introducing a restriction $C\big|_{[C]_n}$ of the function $C$ to a subset $[C]_n$ of $\mathbb{M}{_n}$ which is injectively and surjectively valued on $\mathbb{P}{_n}[\mathbb{C}]$. To this end, let us first observe that the function $C$ is surjective and, by definition, ${P}_n(z)$ is the characteristic polynomial of all and only the matrices belonging to $C^{-1}({P}_n(z))\subset \mathbb{M}{_n}$. Moreover, $C^{-1}({P}_n(z))\cap C^{-1}({P'}_n(z))=\emptyset$ when ${P}_n(z)\neq {P'}_n(z)$. Therefore, the infinite subsets $C^{-1}({P}_n(z))$ of $\mathbb{M}{_n}$ corresponding to the infinite $n$-degree polynomials ${P}_n(z)$ represent a partition of $\mathbb{M}{_n}$. We can say equivalently that we are introducing in $\mathbb{M}{_n}$ the equivalence relation $A\sim B$ consisting in the condition that $A\in \mathbb{M}{_n}$ and $B\in \mathbb{M}{_n}$ share the characteristic polynomial and thus belong to a given equivalence class $C^{-1}({P}_n(z))$. At this point,we define the subset $[C]_n$ by choosing one element from each equivalence class. According to Zermelo's postulate,  $[C]_n\neq\emptyset$ can always be constructed (in infinitely many ways in the present case, since each equivalence class is infinite), and the cardinality of its intersection with $C^{-1}({P}_n(z))$ is precisely one for any ${P}_n(z)$ by construction.
Therefore, every \textit{one-to-one} function $C\big|_{[C]_n}: [C]_n\to\mathbb{P}{_n}[\mathbb{C}]$ obtained by applying the axiom of choice to the quotient set $\mathbb{M}{_n}/\sim$ to generate $[C]_n$ is invertible. Hence, the function $C\big|_{[C]_n}^ {-1}:\mathbb{P}{_n}[\mathbb{C}]\to[ C]_n$ defines a $[C]_n$-dependent Hadamard well-posed inverse problem, whose solution, by construction, can be given in terms of $[C]_n$ in the form
 \begin{eqnarray} C\big|_{[C]_n}^ {-1}{(P_n(z))}= [C]_n\cap C^{-1}({P}_n(z)). \label{solinv}\end{eqnarray}
 We remark that different legitimate choices of the subset $[C]_n$ lead to different  inverse problems, all well-posed in the fixed domain $\mathbb{P}{_n}[\mathbb{C}]$, and the corresponding solutions \eqref{solinv} differ in the generally non-similar images of one or more polynomials ${P}{_n}(z)$.
 \par  We also point out that any derogatory matrix $D$ cannot be classified as a companion matrix of its characteristic polynomial $P_D(z)\equiv \det(zI_n-D)$, since $D$ annihilates a polynomial having a degree lower than that of $P_D$ \cite{Horn}. 
\par In this paper, a matrix whose characteristic polynomial coincides with a given polynomial $P_n(z)$  is called an \textit{Almost-Companion  Matrix} of $P_n(z)$. Clearly, any CM of $P_n(z)$ is an ACM too. A derogatory matrix $D$ such that $P_D(z)=P_n(z)$ is an ACM. In addition, a matrix similar to an ACM is still an ACM. The converse of this statement is generally false: two ACMs of the same given polynomial are not necessarily similar  \cite{Barnett}.  The set of all the ACMs of $P_n(z)$ cannot be generated by similarity transformations starting from an assigned ACM, since this set always includes both derogatory and nonderogatory matrices.\label{inverse problem}

\section{Almost-Companion Matrices of a Cubic Complex Polynomial}\label{sec:cubic}

In this section, we focus on the search of an ACM for the third degree polynomial 
\begin{eqnarray}P_{3c}(\eta)=\eta^3 +p\eta +q, \label{cubic canonical}\end{eqnarray}
which is the canonical form of 
\begin{eqnarray}P_3(z)&=&z^3 +c_1z^2+c_2z +c_3, \label{cubic}  \end{eqnarray}
obtained by the translation  
\begin{eqnarray}\eta=z+\frac{c_1}{3}.\label{translation}\end{eqnarray} 
The generally complex numbers $p$ and $q$ in \eqref{cubic canonical} are related to the coefficients of $P_3(z)$ as follows:
\begin{eqnarray} p= -\frac{c_1^2}{3} + c_2,\quad q=\frac{ 2c_1^3}{27} -\frac {c_1c_2}{3}+c_3.\label{pq}\end{eqnarray}
 We denote $Q_{3c}$ the ACM of $P_{3c}(\eta)$ defined by
\begin{eqnarray}P_{3c}(\eta)&\equiv& \det(\eta I_3-Q_{3c})=\det((\eta  -\textstyle\frac{c_1}{3})I_3-(Q_{3c}-\frac{c_1}{3}I_3))\nonumber\\&=& \det(z I_3-(Q_{3c}-\textstyle\frac{c_1}{3}I_3))\equiv P_3(z),\label{ QQ}\end{eqnarray} 
which means that
\begin{eqnarray}Q_{3}=Q_{3c}-\textstyle\frac{c_1}{3}I_3,\label{Q3Q3c}\end{eqnarray}
is the simple recipe to get the corresponding ACM $Q_{3}$ of $P_3(z)$ from $Q_{3c}$. This analysis sheds light on the advantage of first deriving $Q_{3c}$ for the simpler canonical form of a given polynomial and then finding $Q_{3c}$ from the straightforward relation (9).
Next, we formulate a trial ACM $Q_{3c}$ of \eqref{cubic canonical}. To this end, we observe that, in accordance with the Vieta-Girard formula for the sum of the roots of \eqref{cubic canonical} \cite{10.2307/2299273}, the absence of the quadratic term in $P_{3c}(\eta)$ implies that $\mbox{tr}(Q_{3c})=0$. 
Moreover, every matrix with elements in $\mathbb{C}$ is unitarily equivalent to a matrix with equal main diagonal elements \cite{Horn}. Thus, it is legitimate to set the diagonal elements of our trial $Q_{3c}$ equal to zero. In constructing an ACM of \eqref{cubic canonical}, we aim to write its  non diagonal elements in such a way that, in the particular case of a real cubic $P_3$, and hence $P_{3c}$, the trial matrix $Q_{3c}$ becomes structurally Hermitian provided that $p$ and $q$ in \eqref{cubic canonical} satisfy specific conditions, which will also be derived within our approach. The feasibility of this approach will highlight the greater flexibility of the ACMs compared to that of the CMs.

Following this strategy, we propose the following trial $Q_{3c}$:
\begin{eqnarray}Q_{3c}\equiv Q_{3c} (\rho,\varphi,\varphi_{13}) =-\left(
                     \begin{array}{ccc}
                       0 & \rho e^{i\frac{\varphi}{2}} &  \rho e^{i\frac{\varphi}{2}}                     e^{i\varphi_{13}} \\
                        \rho e^{i\frac{\varphi} {2}}  & 0 &  \rho  e^{i\frac{\varphi}{2}}  \\
                         \rho  e^{i\frac{\varphi}{2}}                     e^{-i\varphi_{13}}  &\rho  e^{i\frac{\varphi}{2}}       & 0 \\
                     \end{array}
                   \right),\label{Q3C}
\end{eqnarray}
where the minus sign was introduced for convenience, considering the form of the charateristic polynominal. In equation \eqref{Q3C}, $\rho$ is real and positive, $\varphi$ is real, whereas $\varphi_{13}$ is, in general, a complex number. It is readily seen that, when $\varphi=0$ or $\pi$ and $\varphi_{13}$ is real, 
the matrix $Q_{3c}(\rho,\varphi,\varphi_{13})$ is Hermitian, consistent with our search strategy. It is useful to note that the complex conjugate of $e^{i\varphi_{13}}$ is $e^{-i\varphi_{13}^{\star}}$, where $\varphi_{13}^{\star}$ denotes the conjugate of $\varphi_{13}$.

The characteristic polynomial of $Q_{3c} (\rho,\varphi,\varphi_{13})$ is
\begin{eqnarray}\det{(\eta I_3- Q_{3c}(\rho,\varphi,\varphi_{13}))}=\eta^3-3\rho^2 e^{i\varphi}\eta+2\rho^3 e^{\frac{3}{2}i\varphi} \cos{\varphi_{13}}.\label{polyQ3C}
\end{eqnarray}
Then, identifying the polynomial \eqref{cubic canonical} with \eqref{polyQ3C} yields:
\begin{eqnarray}\label{pqconditions}
 \left\{
\begin{array}{l}
 p\equiv |p|e^{i\Theta_p}= -3\rho^2 e^{i\varphi}=3\rho^2 e^{i(\varphi+\pi)},
\\
q=2\rho^3 e^{{\frac{3}{2}}i\varphi} \cos{\varphi_{13}}.
\end{array}
\right.
\end{eqnarray}
Given $p$, the first equation \eqref{pqconditions} allows us to fix $\rho$ and select $\varphi$ (in an infinite set) as follows:
\begin{eqnarray} \rho=\sqrt{\frac{|p|}{3}} ,\quad  \varphi=\Theta_p-\pi.\label{pefi}\end{eqnarray}
Defining, for $p\neq0$, the complex parameter
\begin{eqnarray}\chi=\frac{-i q e^{-\frac{3}{2}i\Theta_p}}{2\sqrt{\frac{|p|^3}{27}}},  \label{cosfi13}\end{eqnarray}
the second equation \eqref{pqconditions} becomes an elementary trigonometric equation in $\mathbb{C}$: 
\begin{eqnarray}
\cos{\varphi_{13}}=\chi,
\label{equperphi13}\end{eqnarray}
which admits infinitely many solutions for any $\chi$; in fact, similarly to the cosine function of a real variable, the complex cosine function is even and periodic with period $2\pi$. 

Using Euler's formula, equation \eqref{equperphi13} is easily transformed into a quadratic equation in the variable  $e^{i\varphi_{13}}$, whose solution leads to
\begin{eqnarray}
	\varphi_{13}=  -i\ln\left(\chi+ {i|1-\chi^2|}^\frac{1}{2}e^{\frac{i}{2}\arg(1-\chi^2)}\right)=\arccos\chi.\label{newsolphi13}\end{eqnarray}
Due to the presence of the multi-valued complex function $\arg(\chi^2-1)$, expression \eqref{newsolphi13} represents the set of infinite images of $\chi$ generated by the inverse of the non-injective cosine function
over $\mathbb{C}$.
Therefore, strictly speaking, the  expression found for $\varphi_{13}$ cannot be introduced as it is in the matrix $Q_{3c} (\rho,\varphi,\varphi_{13})$. In fact, the three parameters appearing in the trial ACM \eqref{polyQ3C} of \eqref{cubic canonical} must be single-valued functions of the complex coefficients $p$ and $q$. 
For our purposes, therefore, we now need to extract a specific single-valued complex function from the multi-valued function $\varphi_{13}$. 
 
The single-valued complex function that we use here is the principal value $\Phi_{13} $ of $\varphi_{13}$, which is obtained from \eqref{newsolphi13} by substituting the multi-valued functions $\arg$ and $\ln$ with their  principal values, denoted $\operatorname{Arg}$ and $\operatorname{Ln}$, respectively. This choice amounts, by definition, to constructing the principal value $\operatorname{Arccos}{(\chi)}$ of function $\arccos(\chi)$, which is mostly used in the literature \cite{abramowitz1964handbook,haber2011complex}. It is worth noting that equation \eqref{newsolphi13} can also be written in terms of $\chi^2-1$, but then the use of the principal value in the resulting expression for $\varphi_{13}$ would have some drawbacks, as is discussed in detail in \cite{kahan1987branch}.
The procedure described above gives
\begin{eqnarray}
 \Phi_{13}&\equiv&\Phi_{13}(\chi)= -i\operatorname{Ln}\left(\chi+ i|1-\chi^2|^{\frac{1}{2}}e^{\frac{i}{2}\operatorname{Arg}           (1-\chi^2)}\right)\equiv\operatorname{Arccos}{(\chi)}\nonumber\\&=& \operatorname{Arg}\left(\chi+ i|1-\chi^2|^{\frac{1}{2}}e^{\frac{i}{2}\operatorname{Arg}(1-\chi^2)}\right)-i\operatorname{\textit{ln}}\left(\left|\chi+ i|1-\chi^2|^{\frac{1}{2}}e^{\frac{i}{2}\operatorname{Arg}(1-\chi^2)}\right|\right)\nonumber\\&\equiv&  Re(\Phi_{13})+iIm(\Phi_{13}), 
 \label{FI13}\end{eqnarray}
 where $\operatorname{\textit{ln}}$ denotes the ordinary real logarithm of its positive argument and the first term in \eqref{FI13} is the principal value of  $\operatorname{arg}(\chi+ {i|1-\chi^2|}^{\frac{1}{2}}e^{\frac{i}{2}\operatorname{Arg}(1-\chi^2)})$, which, by definition, generates real images in $ (-\pi,\pi]$. Equation \eqref{FI13} provides the algebraic representation of the complex single-valued function $\operatorname{Arccos}{(\chi)}$ whatever $\chi\in \mathbb{C}$. Note that, for any real $\chi$ such that  $|\chi|\leq 1$ $(|\chi|\geq 1)$, the immaginary (real) component of $\Phi_{13}$ identically vanishes.
 
We have determined all the ingredients for constructing the trial ACM of \eqref{cubic canonical} when $p\neq 0$. In accordance with \eqref{FI13}, this ACM is a generally non-Hermitian matrix that can be written as follows:
 \begin{eqnarray}\widetilde{Q}_{3c} (p,q) =\sqrt{\frac{|p|}{3}}e^{i\frac{\varphi_p}{2}}\left(
                     \begin{array}{ccc}
                       0 &  1 &                        e^{i\Phi_{13}(\chi)} \\
                         1 & 0 &  1  \\
                                               e^{-i\Phi_{13}(\chi)}  &   1       & 0 \\
                     \end{array}
                   \right).\label{Qpq}
\end{eqnarray}
where
\begin{eqnarray}\varphi_p=\Theta_p+\pi.\label{pefi}\end{eqnarray}
Thus, based on \eqref{Q3Q3c}, the ACM of \eqref{cubic} has the form 
 \begin{eqnarray}\widetilde{Q}_{3} (p,q) =\left(
                     \begin{array}{ccc}
                        -\frac{c_1}{3} &  \sqrt{\frac{|p|}{3}}e^{i\frac{\varphi_p}{2}} &                        \sqrt{\frac{|p|}{3}}e^{i\left[\frac{\varphi_p}{2}+\Phi_{13}(\chi)\right]} \\
                         \sqrt{\frac{|p|}{3}}e^{i\frac{\varphi_p}{2}} & -\frac{c_1}{3}   &  \sqrt{\frac{|p|}{3}}e^{i\frac{\varphi_p}{2}}  \\
                                               \sqrt{\frac{|p|}{3}}e^{i\left[\frac{\varphi_p}{2}-\Phi_{13}(\chi)\right]}  &   \sqrt{\frac{|p|}{3}}e^{i\frac{\varphi_p}{2}}       & -\frac{c_1}{3}   \\
                     \end{array}
                   \right).\label{Q3_1}
\end{eqnarray}
We can find an ACM of \eqref{cubic canonical} for $p=0$ through the same kind of approach, beginning with a matrix different from \eqref{Q3C}. Our solution, denoted by $\overline{Q}_{3c} (q)$, can be cast as follows:
 \begin{eqnarray} \overline{Q}_{3c} (q)=\left(\frac{|q|}{\sqrt{3}}\right)^\frac{1}{3}e^{\frac{i}{3}\operatorname{Arg}(iq)}\left(                    \begin{array}{ccc}
                       0 &  1 &                       1 \\
                 -e^{-i\frac{4}{3}\pi}         & 0 &  -1  \\
                                                 -e^{i\frac{4}{3}\pi} &   1       & 0 \\
                     \end{array}
             \right).\label{Qq}
\end{eqnarray}
For completeness, we also write $\widetilde{Q}_{3c}(p,0)$ $(\chi=0)$:
\begin{eqnarray}\widetilde{Q}_{3c} (p,0) =\sqrt{\frac{|p|}{3}}e^{i\frac{\varphi_p}{2}}\left(
                     \begin{array}{ccc}
                       0 &  1 &                         i \\
                         1 & 0 &  1  \\
                                                -i  &   1       & 0 \\
                     \end{array}
                   \right),\label{Qp0}
\end{eqnarray}
since $\Phi_{13}(0)=\pi/2$ from \eqref{FI13}. It is not difficult to see that the eigenvalues of \eqref{Qp0} are $0$ and  $\pm|p|^{\frac{1}{2}}e^{\frac{i\varphi_p}{2}}$.

Above, we formulated and solved the inverse problem of finding ACMs of generic cubic complex polynomials. Next, exploiting these ACMs, we will present a way to find the roots of the given polynomial without resorting to the Cardano-Dal Ferro formulas, and then we will delve into the implications of having a polynomial with real coefficients on the form of the ACM and its roots.
\subsection{Roots Characterization}\label{discussion}
\par  Next we assume that $p \neq 0$. Then, the eigenvalues of $\widetilde{Q}_{3c}(p,q)$ are the roots of \eqref{cubic canonical}, and hence the roots of the characteristic polynomial of the matrix appearing in the right-hand side of \eqref{Qpq}, each multiplied by the pre-factor $\sqrt{\frac{|p|}{3}}e^{i\frac{\varphi_p}{2}}$. 
This characteristic polynomial $\tilde{P}_{3c}(\tilde\eta)$ in the unknown $\tilde\eta$  has the form
\begin{eqnarray}\tilde{P}_{3c}(\tilde\eta)= {\tilde\eta}^3 -3{\tilde\eta}-2\cos(\Phi_{13}(\chi)). \label{cubictilde} \end{eqnarray}

The cosine representation of the free term in the polynomial \eqref{cubictilde} is remarkable, because it allows one to guess, at first glance, one of its three roots and then to exactly construct the other two by simply reducing the cubic polynomial $\tilde{P}_{3c}(\tilde\eta)$ to a quadratic polynomial. In fact, without resorting to the well-known Cardano-Dal Ferro formulas \cite{doi:10.1142/1284}, and using instead the elementary triplication formula for the cosine function $\cos(3z)= 4\cos^3(z)-3\cos(z)$, which also holds in the complex field, it is immediate to see that the generally complex expression\footnote{An equivalent solution was obtained in \cite{Lambert}, using an angle offset by $-\pi$ with respect to $\Phi_{13}$. Solutions in trigonometric form of the canonical (or depressed) cubic equation equivalent to those presented here and in \cite{Lambert} were obtained by Francois Viete (1540-1603) for the case in which the polynomial in equation \eqref{cubic canonical} is real.}
\begin{eqnarray}\tilde\eta_{1}= 2\cos\left(\textstyle{\frac{1}{3}}\Phi_{13}(\chi)\right) \label{magicroot}
\end{eqnarray} 
is a root of \eqref{cubictilde} whatever the complex coefficients $p \neq 0$ and $q$. The algebraic representation \eqref{FI13} of $\Phi_{13}(\chi)$ is the key to explicitly write the $p$- and $q$-dependence of the real and imaginary components of the algebraic expression for $\tilde\eta_1$, which are obtained using Euler's formula as
\begin{eqnarray}  
Re(\tilde\eta_1)=2\cos\left({\frac{1}{3}}Re(\Phi_{13})\right) \cosh\left({\frac{1}{3}}Im(\Phi_{13})\right) ,
\\
Im(\tilde\eta_1)=-2\sin\left({\frac{1}{3}}Re( \Phi_{13})\right)  \sinh\left({\frac{1}{3}}Im(\Phi_{13})\right),
\label{ReIm}
\end{eqnarray}
where  $Re( \Phi_{13})$ and $Im(\Phi_{13})$ are defined in \eqref{FI13}.
The other two roots are easily found to be \cite{Lambert}
\begin{eqnarray}\tilde\eta_{k}= -\frac{1}{2}\tilde\eta_{1}+(-1)^{k+1} \sqrt{3}|{\sin}^2\left(\textstyle{\frac{1}{3}}\Phi_{13}(\chi)\right)|^\frac{1}{2}e^{\frac{i}{2}\operatorname{Arg}\sin^2\left(\textstyle{\frac{1}{3}}\Phi_{13}(\chi)\right)},\quad k=2,3.
\label{etapmroot}
\end{eqnarray}
Equations \eqref{magicroot} and \eqref{etapmroot} express the roots of $\tilde{P}_{3c}(\tilde\eta)$ as functions of parameter $\Phi_{13}$, which appears in the first and last antidiagonal terms of matrices \eqref{Qpq} and \eqref{Q3_1}. We thus conclude that our procedure to construct the ACM also yields the roots of the (generally complex) characteristic polynomial. 
It is interesting to highlight the conditions for a complex polynomial to admit real roots (note that the roots can never be all real, however, if the imaginary part of at least one of the polynomial coefficients is nonzero). To this end, it is convenient to start from the polynomial form \eqref{cubic}. We write the three coefficients of \eqref{cubic} as $c_j\equiv x_j+iy_j$, with $j=1,2,3$. If a real root $r$ of \eqref{cubic} exists, it must satisfy the equation $y_1r^2+y_2r+y_3=0$, which results from equating to zero the imaginary part of the polynomial. $\delta=(y_2)^2-4y_1y_3\geq 0$ is clearly a necessary condition for the existence of the root $r$, and the two only possible expressions of $r$ are $\frac{-y_2\pm\sqrt{\delta}}{2y_1}$. Then, any of these two expressions is indeed a root of \eqref{cubic} only if it satisfies the additional condition $r^3+x_1r^2+x_2r=-x_3$, which results from the real part of the polynomial. We can thus state that the cubic polynomial \eqref{cubic} has at least one real root if and only if the inequality $\delta\geq 0 $ and the last condition are both met. In particular, when $\delta= 0 $, $r$ is a double root. Finally, we examine the case $y_1=0$ (and $y_2\neq 0$, since otherwise no real root exists for a complex polynomial). Applying the same procedure, one finds that a real root $r=-\frac{y_3}{y_2}$ of \eqref{cubic} exists if and only if the condition $r^3+x_1r^2+x_2r=-x_3$ holds.
It is easy to convince oneself that this real root has multiplicity two, being a real root of the first derivative of \eqref{cubic}.
 
\subsection{Real polynomial case}
\par We now investigate the special form of $\widetilde{Q}_{3c}(p,q)$ in the case in which the three coefficients of \eqref{cubic} are real, with the aim of establishing properties of the polynomial roots based on its almost-companion representation built above.  

In this case, \eqref{pq} implies that $p$ and $q$ are real, and thus \eqref{cubic canonical} is also a real polynomial over $\mathbb{C}$. Moreover, since $\Theta_p$ can only be $0$ ($p>0$) or $\pi$ ($p<0$), the parameter $\chi$ defined in \eqref{cosfi13} is purely imaginary or real, respectively, and therefore its square
\begin{eqnarray}
\chi^2=-\frac{27q^2 e^{-3i\Theta_p}}{4|p|^3}=-\frac{27q^2}{4p^3},
\label{FI13real}\end{eqnarray}
 is real for any $p$. As a consequence, using \eqref{FI13} it is not difficult to prove that, if and only if the discriminant\footnote{Note that the discriminant can also be defined with the opposite sign \cite{Dickson}. Clearly in this case all the inequalities involving the discriminant of the polynomial in canonical form are to be inverted.}
 \begin{eqnarray}\Delta(p,q)\equiv\frac{p^3}{27}+\frac{q^2}{4}\leq 0,
\label{deltanonpositivo} 
\end{eqnarray}
(which implies $p<0$, that is, $\Theta_p=\pi$, and $|\chi|\leq 1$), the imaginary part of $\Phi_{13}(\chi)$ given in equation \eqref{FI13} vanishes, while its real part assumes the simple expression


\begin{eqnarray}
 \Phi_{13}(p,q)=    \operatorname{Arg}\left(\chi+ i\Big|1-\chi^2\Big|^{\frac{1}{2}} \right)=\pi -\operatorname{Arccos}\left(- \frac{3q}{2p}\sqrt\frac{-3}{p}\right),
\label{FI13real_2}
\end{eqnarray}
where we used the identity $\operatorname{Arccos}(x)=\pi-\operatorname{Arccos}(-x)$ valid for any real $x$ such that $|x|\leq 1$. It is worth noting that in the present case \eqref{etapmroot} takes the simpler form $\tilde\eta_{k}= -\frac{1}{2}\tilde\eta_{1}+(-1)^{k+1} \sqrt{3} \sin\left(\textstyle{\frac{1}{3}}\Phi_{13}(\chi)\right) $  since \eqref{FI13real_2} shows that $\Phi_{13}\in [0,\pi]$.

Under the condition \eqref{deltanonpositivo} and considering that $\varphi_p=\Theta_p+\pi=2\pi$, the specialization of \eqref{Qpq} to the case under scrutiny produces the Hermitian ACM of \eqref{cubic canonical} as
\begin{eqnarray}\widetilde{Q}_{3c}(p,q) =-\sqrt{\frac{|p|}{3}}\left(
                     \begin{array}{ccc}
                       0 &  1 &                        e^{i\Phi_{13}(p,q)} \\
                         1 & 0 &  1  \\
                                               e^{-i\Phi_{13}(p,q)}  &   1       & 0 \\
                     \end{array}
                   \right),\label{Qpq_1}
\end{eqnarray}
where the real angle $\Phi_{13}$ is given by equation \eqref{FI13real_2}. The Hermitian nature of the ACM built assures that \eqref{cubic canonical}, as well as \eqref{cubic}, has three real roots. These roots are distinct when $\Delta(p,q)<0$, while two of them are coincident if $\Delta(p,q)=0$ \cite{Tricomi}.

We can write the real roots $x_k$ $(k=1,2,3)$ of \eqref{cubic} as follows:
\begin{eqnarray}x_k=-\sqrt{\frac{|p|}{3}}\tilde\eta_k - \frac{c_1}{3},  \label{roots z}\end{eqnarray}
where  $ \tilde\eta_k$ are the three (real) roots of \eqref{cubictilde}.
Equations \eqref{magicroot} and \eqref{etapmroot}, together with \eqref{FI13real_2}, yield
\begin{eqnarray}x_k=2\sqrt{\frac{|p|}{3}}\cos\left(\frac{\Phi_{13}(p,q)+(2k+1)\pi}{3}\right)-\frac{c_1}{3}, \quad k=1,2,3. \label{radicireali} \end{eqnarray}
 It is possible to check that this formula gives the well-known trigonometrical and translated forms of the three roots of \eqref{cubic} when they are real (see \cite{doi:10.1080/00029890.1992.11995845}). 

When $\Delta(p,q)>0$, only one of the three roots of Equation \eqref{cubic canonical} or \eqref{cubic} (with real polynomial coefficients) is real, while the other two roots are complex conjugate. In particular, the real root corresponds to $\tilde\eta_3$ for $p>0$ and to $\tilde\eta_1$ for $p<0$. In more detail, the three roots are
\begin{align}
	z_1 &= \frac{\sqrt{p}}{2}(Y+iX)-\frac{c_1}{3} \label{rootspg01}
	\\
	z_2 &= \frac{\sqrt{p}}{2}(Y-iX)-\frac{c_1}{3} \label{rootspg02}
	\\
	z_3 &= -\sqrt{p}Y \label{rootspg03}-\frac{c_1}{3}
\end{align}
where
\begin{equation}
	X = \sqrt[3]{\sqrt{1-\chi^2}+i\chi}+\sqrt[3]{\sqrt{1-\chi^2}-i\chi}
	\label{rootspg04}
\end{equation}
and
\begin{equation}
	Y = \frac{\sqrt[3]{\sqrt{1-\chi^2}+i\chi}
	-\sqrt[3]{\sqrt{1-\chi^2}-i\chi}}{\sqrt{3}}
	\label{rootspg05}
\end{equation}
with
\begin{equation}
\chi=-i\frac{3q}{2p}\sqrt{\frac{3}{p}}
\label{rootspg06} \end{equation}
for $p>0$, and
\begin{align}
	z_1 &= -\sqrt{-\frac{p}{3}}C-\frac{c_1}{3} \label{rootspl01}
	\\
	z_k &= \sqrt{-\frac{p}{3}}\left[\frac{C}{2}
	+i(-1)^k\sqrt{3\left(\frac{C^2}{4}-1\right)}\right]-\frac{c_1}{3}, \quad k=2,3,
	\label{rootspl02}
\end{align}
where
\begin{equation}
	C=\sqrt[3]{\chi+\sqrt{\chi^2-1}}+\sqrt[3]{\chi-\sqrt{\chi^2-1}}
	\label{rootspl03}
\end{equation}
with
\begin{equation}
	\chi=-\frac{3q}{2p}\sqrt{-\frac{3}{p}}
	\label{rootspl04} \end{equation}
for $p<0$ (see derivation in Appendix A).\footnote{A more elaborate derivation of the roots leading to their formulation in terms of trigonometric functions can be found in \cite{Tricomi}.}

\section{Almost-Companion Density Matrices of a Qutrit on Demand}\label{densitymatrix}
  
A quantum system living in the  Hilbert space $\mathcal{H}$ spanned by three orthonormal states $\ket1$, $\ket2$, and $\ket3$ is called a qutrit \cite{carmichael2001directions}.
A pure state of the qutrit can always be represented as a normalized linear combination of  these three states.
To describe an arbitrary pure or mixed state of the qutrit with the same formalism, one uses instead a linear operator $\hat \rho$ called the density operator \cite{fano1957description}. 
 It acts on $\mathcal{H}$ and, by definition, is positive semi-definite with trace 1: $\operatorname{\textit{tr}}(\rho)=1$. It is well known that any positive semi-definite operator is Hermitian since its skew-Hermitian part vanishes \cite{Horn,blanchard2015mathematical}. As a consequence, any positive semi-definite operator is diagonalizable, and it is possible to show that its eigenvalues are real non-negative numbers. In particular, any density operator $\hat \rho$ is Hermitian. The three eigenvalues of the operator $\hat \rho$ describing the state of a qutrit are the populations of the three eigenstates of $\hat \rho$.
 The $3\times 3$ basis-dependent matrix representation of $\hat \rho$, called density matrix and denoted by $\rho$, is also positive semi-definite and hence Hermitian. We observe incidentally that, conversely, any Hermitian matrix with non-negative eigenvalues is positive semi-definite and, if its trace is 1, it is a density matrix.

The purpose of this section is to demonstrate that our recipe for constructing the Hermitian ACM of a generic third degree polynomial admitting three real roots provides an effective tool for writing density matrices of a qutrit on demand.
 
The aforementioned definition of density matrix results in unambiguous properties of the real coefficients of its characteristic polynomial. First of all, writing this polynomial in the canonical form
\begin{eqnarray}p_{3c}(\eta)=\eta^3 +p\eta +q, \label{pcubic canonical}\end{eqnarray} 
the condition
\begin{eqnarray}p\leq -\frac{3}{2}\sqrt[3]{2q^2} \label{1629}\end{eqnarray}
stemming from Equation \eqref{deltanonpositivo} ensures that $\Phi_{13}(p,q)$ is real, so that the ACM \eqref{Qpq_1} of \eqref{pcubic canonical} is Hermitian and hence has real eigenvalues. Then, turning to form \eqref{cubic} of the monic polynomial through the translation \eqref{translation}, the Vieta-Girard formula for the sum of the roots \cite{connor1956historical} implies that the coefficient of the quadratic term be $-1$. Moreover, Descartes's sign rule \cite{descartes2012geometry} requires that the four coefficients of polynomial \eqref{cubic} have alternate signs in order to have three positive routes. Therefore, the characteristic polynomial of an arbitrary density matrix of a qutrit is necessarily a third-degree real and monic polynomial of the form
\begin{eqnarray}p_3(x)=x^3 -x^2+a^2x-b^2,
\label{polydensity}\end{eqnarray}
where $a$ and $b$ are real numbers that satisfy condition \eqref{1629} after translation \eqref{translation}. One or two roots are zero if $a\neq0, b=0$ or $a=0, b=0$, respectively, while the inequality \eqref{1629} is never satisfied for $a=0$ and $b\neq0$.

In conclusion, under the conditions \eqref{1629} and \eqref{polydensity},
$\widetilde{Q}_{3}(p,q)=\widetilde{Q}_{3c}(p,q)-\textstyle\frac{c_1}{3}I_3$, with $\widetilde{Q}_{3c}(p,q)$ given by Equation \eqref{Qpq_1}, is a density matrix. Incidentally, we point out that our inverse problem admits infinitely many non-Hermitian solutions, that is, non-Hermitian ACMs of \eqref{pcubic canonical} or \eqref{polydensity}, such as, for example, the corresponding Frobenius companion matrix. Therefore, the explicit construction of the Hermitian ACM of \eqref{polydensity} and, more generally, of any real third-degree polynomial with only real roots is a successful outcome of our search strategy \eqref{Q3C}. This recipe, in turn, forms the basis of the application presented below.

Let us introduce the set $\mathcal{D}$ of all density matrices of a qutrit in a given basis $\{\ket n, n= 1,2,3\}$ of $\mathcal{H}$. $\mathcal {E}$ be the binary relation  in $\mathcal{D}$ defined as follows: $\rho_1 \in \mathcal{D}$ and $\rho_2\in\mathcal{D} $ are in the relation $\mathcal {E}$ if they are ACMs of the same polynomial $p_3(x)$ defined in \eqref{polydensity}. This relation, expressed by writing $\rho_1\mathcal {E}\rho_2$, is an equivalence relation as it is manifestly reflexive, symmetric, and transitive. 
$\mathcal{D}$ is thus partitioned by $\mathcal {E}$. The quotient set $\mathcal{D}/\mathcal {E}$  consists of all equivalence classes of $\mathcal{D}$ with respect to $\mathcal {E}$. Each equivalence class, which comprises all density matrices with the same characteristic polynomial, is represented by one (arbitrarily chosen) of its elements, $\overline\rho$, and is commonly denoted by $[\overline\rho]$. This is where our result \eqref{Qpq_1} enters the scene, providing an easy way to parameterize the quotient set of $\mathcal{D}$. 

It is always possible to use the matrix $\widetilde{Q}_{3}(p,q)=\widetilde{Q}_{3c}(p,q)-\textstyle\frac{c_1}{3}I_3$ as the representative element of the  equivalence class consisting of all elements of $\mathcal{D}$ sharing the characteristic polynomial \eqref{polydensity}, which, in turn, is uniquely associated to its canonical form \eqref{pcubic canonical}. In this way, we establish a one-to-one correspondence between $\mathcal{D}/\mathcal {E}$ and the set $\mathcal{P}$ of polynomials \eqref{pcubic canonical}. This correspondence amounts to parameterize the quotient set of $\mathcal{D}$ in terms of $p$ and $q$. The most ambitious target of parameterizing the set $\mathcal{D}$ is discussed in a recent topical issue \cite{bruning2012parametrizations}.
It is worth emphasizing that a density matrix $\rho$ belongs to the class of equivalence $\left[\widetilde{Q}_{3c}(p,q)-\textstyle\frac{c_1}{3}I_3\right]$ if and only if it can be unitarily generated from $\widetilde{Q}_{3c}(p,q)-\textstyle\frac{c_1}{3}I_3$, since its characteristic polynomial, trace, and Hermiticity are unitarily invariant, thus implying the invariance of the positive semi-definiteness.
Therefore, while two similar matrices are ACMs of the same polynomial, a similarity transformation  of a density matrix does not generate, in general, a density matrix \cite{Horn}. 

Our parameterization of $\mathcal{D}/\mathcal {E}$ in terms of the coefficients of its characteristic polynomial written in canonical form provides the theoretical basis for constructing, on demand and in a prefixed basis of $\mathcal{H}$, almost-companion density matrices of any assigned polynomial $p_3(x)$ fulfilling the condition $\Delta \leq 0$.
We illustrate the concrete applicability of our recipe by constructing an almost-companion density matrix starting from the polynomial
\begin{eqnarray}p_3(x)=x^3 -x^2+\frac{11}{36}x-\frac{1}{36}. 
\label{example}\end{eqnarray} 
 The translation: $\eta=x-\frac{1}{3}$ yields 
\begin{eqnarray}p_{3c}(x)=\eta^3 -\frac{1}{36}\eta , \label{examplec}\end{eqnarray} 
so that, in this case, $p=-\frac{1}{36}$ while $q$ vanishes.
Exploiting Equation \eqref{FI13real_2}, we easily get:
\begin{eqnarray}
 \Phi_{13}(-\frac{1}{36},0)=     \pi -\operatorname{Arccos}(0)= \frac{\pi}{2}.
\label{exFI13real_2}
\end{eqnarray}
We have thus obtained the few ingredients necessary to build an almost-companion density matrix of the given polynomial \eqref{example} as the sum $\widetilde{Q}_3$ of the representative element of the corresponding equivalence class and the matrix $\frac{1}{3}I_3$, in accordance with the realization \eqref{Q3_1} of \eqref{Q3Q3c}.
That is, the density matrix has the form
\begin{eqnarray}\widetilde{Q}_{3} =\frac{1}{6\sqrt{3}}\left(
                     \begin{array}{ccc}
                        2\sqrt{3} & -1  & -i                        \\
                         -1  & 2\sqrt{3} & -1   \\
                                             i    &  -1       & 2\sqrt{3} \\
                     \end{array}
                   \right).\label{Q3}
\end{eqnarray}
Note that, while in this case it is easy to find the roots of \eqref{examplec} directly, and then those of \eqref{example}, in general the roots of the polynomial $p_{3c}(\eta)$ corresponding to a given $p_{3}(x)$ can be found using \eqref{radicireali}.

We stress that any matrix equivalent to \eqref{Q3} through a unitary transformation $\hat V$ is an almost-companion density matrix of \eqref{example} and vice versa. For a given basis, each density matrix thus obtained describes a different (generally mixed) state of the qutrit. If, instead, the unitary transformation is interpreted as the generator of a change of the basis $\{|n>, n=1,2,3\}$, the matrix obtained represents the same density operator in the new basis $(\hat V|n>, n=1,2,3)$.

\section{Unitary Matrices (Operators) on Demand}\label{unitarymatrix}
\par The effective construction of density matrices on demand in Section \ref{densitymatrix}, results from the application of our procedure for constructing ACMs to third degree polynomials that belong to the set $\mathcal{P}$ and satisfy, a priori, necessary and sufficient conditions for the existence of positive semi-definite ACMs of trace 1.
By comparison, the construction of almost-companion unitary matrices on demand (that is, starting from a given appropriate third degree polynomial) requires addressing two hurdles. The first one is to establish with certainty whether the given polynomial can be the characteristic polynomial of a unitary matrix without knowing its zeros a priori. The second difficulty lies in the fact that the trial ACM of a complex arbitrary polynomial, as given by the main equations \eqref{Q3Q3c} and \eqref{Q3C} of our procedure, is never unitary by construction. In this regard, it is important to note that the possibility of finding a non-unitary ACM of a given polynomial is not incompatible with the existence of a unitary almost-companion matrix for the given polynomial. In fact, different ACMs of a given polynomial are generally not unitarily equivalent.

In the light of these considerations, we want first to identify possible structural properties shared by the coefficients of all the characteristic polynomials of a unitary matrix. Then, according to our general procedure, we will introduce a class of trial unitary matrices sufficiently representative to allow us to find a unique ACM for an assigned polynomial whose three roots are unknown but certainly have modulus 1. 

\subsection{Properties of the Characteristic Polynomial of a Unitary Matrix} \label{U41}
\par It is easy to prove the following necessary and sufficient condition concerning the characteristic polynomial of a unitary ACM:
\begin{thm} \label{Pn}
Let $D_m(z)$ be any complex polynomial of degree $m$, with $1\leq m\leq n$, dividing an arbitrarily given complex polynomial $P_n(z)$. Then $P_n(z)$ admits a unitary ACM if and only if any $D_m(z)$ does.
\end{thm}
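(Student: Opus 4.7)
The plan is to reduce Theorem~\ref{Pn} to a spectral characterization of polynomials that arise as characteristic polynomials of unitary matrices. The main auxiliary claim I would establish is: a monic complex polynomial $P(z)$ admits a unitary ACM if and only if every root of $P$ has modulus $1$.

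To prove this auxiliary claim, I would argue both directions explicitly. For $(\Rightarrow)$, I would use that a unitary matrix $U$ is an isometry, so that any eigenpair $(\lambda,v)$ with $\|v\|=1$ satisfies $1=\|v\|=\|Uv\|=|\lambda|$; since the roots of the characteristic polynomial of $U$ are exactly the eigenvalues of $U$ counted with algebraic multiplicity, every root of $P$ lies on the unit circle. For $(\Leftarrow)$, I would exhibit an explicit unitary realization, namely the diagonal matrix $\mathrm{diag}(\lambda_1,\dots,\lambda_m)$, where $\lambda_1,\dots,\lambda_m$ are the roots of $P$ with multiplicity: since $|\lambda_j|=1$ for each $j$, this matrix is unitary, and its characteristic polynomial is $P$ by construction.

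With the auxiliary claim in hand, the theorem itself would follow in two short steps. For $(\Rightarrow)$, if $P_n$ admits a unitary ACM, then all its roots have modulus $1$; since the roots of any divisor $D_m\mid P_n$ form a sub-multiset of the roots of $P_n$, they also have modulus $1$, and so $D_m$ admits a unitary ACM by the auxiliary claim. For $(\Leftarrow)$, the hypothesis that every divisor of $P_n$ of degree between $1$ and $n$ admits a unitary ACM applies in particular to $D_n=P_n$, which is a divisor of itself, and yields the conclusion directly. A more constructive variant of the backward direction is worth noting: from any nontrivial factorization $P_n=D_m\cdot Q_{n-m}$ with unitary ACMs $U_1$ and $U_2$ for the two factors, the block-diagonal matrix $U_1\oplus U_2$ is a unitary ACM of $P_n$.

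The hard part, such as it is, will be purely conceptual: the key is recognizing that the property ``admits a unitary ACM'' is determined entirely by the multiset of roots of the polynomial, and is therefore hereditary under taking divisors. Once this is made explicit through the diagonal realization, neither direction of the equivalence requires any further work; the only small care needed is to track multiplicities correctly so that divisors carry the right multiset of roots.
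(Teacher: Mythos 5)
Your proof is correct and follows essentially the same route as the paper: both arguments reduce to the observation that a polynomial admits a unitary ACM precisely when all of its roots lie on the unit circle, established via the eigenvalue property of unitary matrices in one direction and the diagonal realization in the other. Your explicit isometry argument and the remark that $P_n$ divides itself (making the backward implication immediate) are minor refinements of the paper's reasoning, not a different method.
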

\begin{proof}
Necessity: if $P_n(z)$ admits a unitary ACM, then all its roots have modulus one. This property is obviously transferred to each $D_m(z)$  dividing $P_n(z)$ which, in turn, implies the existence of a diagonal unitary ACM of $D_m(z)$.

Sufficiency:  Since $P_n(z)$ can be represented as product of $n$ monic binomials whose free terms are complex numbers of modulus one by hypothesis, then a diagonal ACM of $P_n(z)$ with entries having modulus one exists. This ACM is unitary \cite{Horn}.
\end{proof}

When $n=2$, it is easy to convince oneself that
\begin{thm} \label{tp2}
A monic complex, second-degree polynomial is the characteristic polynomial of a unitary matrix of order 2, if and only if it has the structure  \begin{eqnarray}
\mathcal{P}_{2}(z) = z^2-r_2e^{i\vartheta}z+e^{2i\vartheta},
\label{P2}\end{eqnarray}
with $r_2\in [0,2]$ and $\vartheta\in (-\pi,\pi]$.
\end{thm}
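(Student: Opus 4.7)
The plan is to reduce the statement to the characterization of those monic quadratic polynomials whose two roots both lie on the unit circle, and then to recognize that such a condition is exactly equivalent to the parametric form quoted in the theorem. Indeed, the $n=2$ case of Theorem~\ref{Pn} tells us that $\mathcal{P}_2$ is the characteristic polynomial of some unitary $2\times 2$ matrix if and only if each of its monic linear factors is, i.e., if and only if its two (complex) roots have modulus one. So the whole argument reduces to rewriting the pair of elementary symmetric functions of two unit complex numbers in the form $(-r_2 e^{i\vartheta},\,e^{2i\vartheta})$ with $r_2\in[0,2]$ and $\vartheta\in(-\pi,\pi]$.

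For necessity, I would write the two eigenvalues of a $2\times 2$ unitary matrix as $\lambda_j=e^{i\alpha_j}$ and apply Vieta's formulas to get $c_2=\lambda_1\lambda_2=e^{i(\alpha_1+\alpha_2)}$ and $-c_1=\lambda_1+\lambda_2=2\cos\!\left(\tfrac{\alpha_1-\alpha_2}{2}\right)e^{i(\alpha_1+\alpha_2)/2}$. The natural choice $\vartheta=(\alpha_1+\alpha_2)/2$ together with $r_2=2\cos((\alpha_1-\alpha_2)/2)$ gives a parameterization of the stated form, except that $r_2$ a priori lies in $[-2,2]$. The one slightly delicate point---and the main obstacle to watch for---is how to collapse to $r_2\in[0,2]$: I would exploit the residual freedom $\vartheta\mapsto\vartheta+\pi$, which leaves $e^{2i\vartheta}$ invariant while flipping the sign of $e^{i\vartheta}$, and which therefore turns a negative $r_2$ into a positive one. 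A final reduction of the shifted angle modulo $2\pi$ places it back into $(-\pi,\pi]$.

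Sufficiency is essentially a direct calculation. Starting from $\mathcal{P}_2(z)=z^2-r_2 e^{i\vartheta}z+e^{2i\vartheta}$ with $r_2\in[0,2]$, the quadratic formula yields
\begin{equation*}
\lambda_\pm=\frac{e^{i\vartheta}}{2}\!\left(r_2\pm i\sqrt{4-r_2^2}\right),
\end{equation*}
where the radicand is nonnegative because $r_2^2\le 4$, and a one-line check gives $|\lambda_\pm|^2=\tfrac14\bigl(r_2^2+(4-r_2^2)\bigr)=1$. Hence $\operatorname{diag}(\lambda_+,\lambda_-)$ is a unitary $2\times 2$ matrix whose characteristic polynomial coincides with $\mathcal{P}_2(z)$; this is the explicit diagonal ACM supplied by the sufficiency construction of Theorem~\ref{Pn} applied to the two linear factors of $\mathcal{P}_2$, and it closes the proof.
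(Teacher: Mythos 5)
Your proof is correct and follows essentially the same route as the paper's (very terse) argument: explicit computation of the two roots of \eqref{P2} to exhibit a diagonal unitary ACM in one direction, and Vieta's formulas together with Theorem \ref{Pn} (roots of modulus one) in the other. You simply supply the details the paper leaves implicit, notably the reduction from $r_2\in[-2,2]$ to $r_2\in[0,2]$ via the shift $\vartheta\mapsto\vartheta+\pi$, which is a worthwhile clarification but not a different method.
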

\begin{proof}
To demonstrate this double statement it is sufficient to explicitly find the two roots of \eqref{P2} for the necessity and to use simple geometric arguments (or exploit Theorem \ref{Pn}) for the sufficiency. 
\end{proof}

We additionally remark that, for $r_2>2$ , the principal arguments of the two roots of \eqref{P2} coincide with $\chi$, and the product of their modules, both different from unity, is still one.

When the order of the unitary matrix is greater than $2$, it is still possible to find peculiar properties possessed by the coefficients of the corresponding characteristic polynomial. However, there are polynomials of degree higher than 2 with a structure similar to \eqref{P2} which also have roots with modulus different from 1. 
We prove here the following useful necessary condition on the structure of the characteristic polynomial of a $3 \times 3$ unitary matrix 
\begin{thm} \label{necP3_1}
The complex third degree characteristic polynomial of any unitary matrix of order 3 has necessarily the  structure:
\begin{eqnarray} \mathcal{P}_{3}(z)= z^3-re^{i\theta_1}z^2+re^{i(\theta-\theta_1)}z-e^{i\theta},
\label{upolychar}\end{eqnarray}
where $r\in [0,3]$, $\theta_1\in (-\pi,\pi]$ and $\theta\in (-\pi,\pi]$.
\end{thm}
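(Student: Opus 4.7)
My plan is to extract the structure directly from the spectral properties of a unitary matrix. Let $U$ be a unitary matrix of order $3$ whose characteristic polynomial is $\mathcal{P}_3(z)$. The spectral theorem guarantees that $U$ is unitarily diagonalizable and that its three (not necessarily distinct) eigenvalues $\lambda_1,\lambda_2,\lambda_3$ all lie on the unit circle. Writing $\mathcal{P}_3(z)$ in factored form and applying Vieta's formulas will yield
$$\mathcal{P}_3(z)=z^3-\bigl(\textstyle\sum_j\lambda_j\bigr)z^2+\bigl(\sum_{j<k}\lambda_j\lambda_k\bigr)z-\lambda_1\lambda_2\lambda_3,$$
so that $c_1=-\operatorname{tr}(U)$, $c_3=-\det(U)$, and $c_2$ is the second elementary symmetric polynomial in the eigenvalues. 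I would then observe that $|\det(U)|=1$ since $U$ is unitary, so setting $\det(U)=e^{i\theta}$ for a unique $\theta\in(-\pi,\pi]$ immediately matches the constant term $-e^{i\theta}$ prescribed by the statement.

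The only nontrivial step is handling the coefficient of $z$. I would exploit the fact that on the unit circle $\lambda_j^{-1}=\overline{\lambda_j}$, so that
$$c_2=\sum_{j<k}\lambda_j\lambda_k=\lambda_1\lambda_2\lambda_3\sum_j\lambda_j^{-1}=\det(U)\,\overline{\operatorname{tr}(U)}.$$
Writing $\operatorname{tr}(U)=re^{i\theta_1}$, where $r=|\operatorname{tr}(U)|\ge 0$ and $\theta_1\in(-\pi,\pi]$ is its principal argument (chosen arbitrarily if $r=0$), the coefficient of $z^2$ becomes $-re^{i\theta_1}$ and the coefficient of $z$ becomes $e^{i\theta}\cdot re^{-i\theta_1}=re^{i(\theta-\theta_1)}$, exactly reproducing the form of $\mathcal{P}_3(z)$ stated in the theorem.

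Finally, the upper bound $r\le 3$ follows from the triangle inequality applied to $|\operatorname{tr}(U)|=|\lambda_1+\lambda_2+\lambda_3|\le|\lambda_1|+|\lambda_2|+|\lambda_3|=3$, while $r\ge 0$ by definition of modulus; the stated ranges for $\theta$ and $\theta_1$ are just the standard range of the principal argument. I do not expect any real obstacle here: the entire theorem is, in essence, the bookkeeping consequence of the single algebraic identity $c_2=\det(U)\,\overline{\operatorname{tr}(U)}$, which in turn rests on the elementary fact $\lambda^{-1}=\overline\lambda$ for $|\lambda|=1$. The hardest judgement call is merely the notational one of deciding to introduce the two independent angular parameters $\theta$ and $\theta_1$ in a form that keeps the three coefficients aesthetically parallel, as the statement does.
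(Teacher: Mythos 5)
Your proof is correct and follows essentially the same route as the paper: the paper's key step (its Eq.\ \eqref{uc2}) is precisely the identity $re^{i(\theta-\theta_1)}=(e^{-i\alpha}+e^{-i\beta}+e^{-i\gamma})e^{i\alpha}e^{i\beta}e^{i\gamma}$, i.e.\ your $c_2=\det(U)\,\overline{\operatorname{tr}(U)}$, with the unit-modulus eigenvalues written explicitly as $e^{i\alpha},e^{i\beta},e^{i\gamma}$ rather than via $\lambda^{-1}=\overline{\lambda}$. No gaps; your treatment of the ranges of $r$, $\theta$, $\theta_1$ is if anything slightly more explicit than the paper's.
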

\begin{proof}
Given any three real numbers $\alpha$, $\beta$, and $\gamma$, it is always possible to find three real numbers $r$, $\theta_1$ and $\theta$ that satisfy the following relations:
\begin{eqnarray} e^{i\alpha}+e^{i\beta}+e^{i\gamma}=re^{i\theta_1}, 
\label{uc1}\end{eqnarray}
\begin{eqnarray} e^{i\alpha}e^{i\beta}e^{i\gamma}=e^{i\theta}. 
\label{uc3}\end{eqnarray}
The product of \eqref{uc3} and the complex conjugate of \eqref{uc1} gives
\begin{eqnarray}re^{i(\theta-\theta_1)}= (e^{-i\alpha}+e^{-i\beta}+e^{-i\gamma})e^{i\alpha}e^{i\beta}e^{i\gamma}=e^{i(\alpha+\beta)}+e^{i(\beta+\gamma)}+e^{i(\alpha+\gamma)}, 
\label{uc2}\end{eqnarray}
where $r\in [0,3]$, $\theta_1\in (-\pi,\pi]$ and $\theta= \operatorname{Arg}e^{i(\alpha+\beta+\gamma)} \in (-\pi,\pi]$.
Equations \eqref{uc1} to \eqref{uc2} represent the Vieta-Girard formulas for the three roots $e^{i\alpha}$, $e^{i\beta}$, and $e^{i\gamma}$ of polynomial \eqref{upolychar}. Since these roots have modulus $1$, as is required for $\mathcal{P}_{3}(z)$ to be the characteristic polynomial of a unitary matrix, the Vieta-Girard formulas \eqref{uc1} and \eqref{uc3} clearly show that the complex coefficients of the characteristic polynomial of any $3\times 3$ unitary matrix are not independent. In fact, the free term and the coefficient of $z^2$, which are involved in Equations \eqref{uc1} and \eqref{uc3}, respectively, univocally determine the coefficient of $z$ through \eqref{uc2}, in accordance with \eqref{upolychar}. 
\end{proof}
Like Theorem \ref{Pn}, Theorem \ref{necP3_1} can be extended to a generic degree $n$. We emphasize that the polynomial form \eqref{upolychar} and its roots have some remarkable properties. For example, the passage from $z$ to the auxiliary variable $u=ze^{i\psi}$ leads, up to a global phase factor, to a polynomial with the same structure \eqref{upolychar} after the angle shifts $\theta_1'=\theta_1+\psi$ and  $\theta'=\theta+3\psi$ (these shifts are unimportant for what concerns the polynomial structure, since the angles $\theta_1'$ and $\theta'$ can take the same range of values as $\theta_1'$ and $\theta'$). Therefore, the three roots of the new polynomial have the same modules and relative principal arguments as the roots of the original polynomial \eqref{upolychar}. Another interesting property resulting from Equation \eqref{uc3} is that, if \eqref{upolychar} admits one root with modulus one, the other two roots must have reciprocal modules (including the case in which they also have modulus one).

Note that Theorem \ref{necP3_1} only expresses a necessary condition, and therefore there exist polynomials with the structure \eqref{upolychar} which do not admit a unitary ACM. Algebraic relations among the three parameters in the expression of $\mathcal{P}_{3}(z)$ not implied by the structure of the polynomial itself can ensure that $\mathcal{P}_{3}(z)$ admits a unitary ACM (vide infra).

Consider, for example, the case $r=3$. The polynomial $z^3-3z^2+3e^{i\pi}z-e^{i\pi}$, has the form \eqref{upolychar}, from which it is obtained by (arbitrarily) choosing $\theta_1=0$ and $\theta=\pi$. This is not the characteristic polynomial of a unitary $3\times 3$ matrix, since its roots are $-1$ and $2\pm\sqrt{3}$; accordingly, $\theta=\pi\neq\operatorname{Arg}e^{i(3\theta_1)}=0$. The relation $\theta=3\theta_1$ guaranties, instead, the existence of three roots of modulus $1$ when $r=3$, as is easily seen geometrically or from the fact that in this case $\mathcal{P}_{3}(z)=(z-e^{i\theta_1})^3$. As another example, for $r=1$, \eqref{upolychar} admits a unitary ACM for any $\theta_1\in (-\pi,\pi]$ and $\theta\in (-\pi,\pi]$, as the roots of the polynomial are $e^{i\theta_1}$ and $e^{i\frac{\theta-\theta_1\pm\pi}{2}}$.

The analysis in the next section will provide expressions for the coefficients of polynomial \eqref{upolychar} that make it the characteristic polynomial of a unitary ACM.

\subsection{Construction of a Trial Unitary ACM}
\par In Section \ref{sec:cubic}, it was convenient to search for an ACM of a generic monic complex polynomial \eqref{cubic} of third degree in the unknown $z$ by resorting to the canonical form \eqref{cubic canonical} of the polynomial through a translation of the complex variable $z$. The advantages of using the polynomial \eqref{cubic canonical} in the translated variable $\eta$ are twofold: the number of parameters appearing in the polynomial expression is reduced from $3$ to $2$ (namely, $p$ and $q$ instead of $c_1$,$c_2$, and $c_3$), and the very simple recipe \eqref{Q3Q3c} allows to obtain the ACM of the given polynomial from the ACM of its canonical form \eqref{cubic canonical}. 

It is clearly possible to pass from $\mathcal{P}_{3}(z)$ to its canonical form through the appropriate translation of $z$. Unfortunately such a strategy is not convenient in this case, since the canonical polynomial generally does not admit a unitary ACM, and thus the further mathematical step complicates the achievement of our goal. We therefore propose here a different approach that combines geometrical and analytical considerations. 

Exploiting Theorems \ref{Pn} and \ref{tp2}, we can represent each element $\mathbb{P}_{3}(z)$ of the set $[\mathbb{P}_{3}(z)]$ of all and only the third-degree polynomials that admit a unitary ACM and share the root $1$ as follows:
\begin{eqnarray}
\mathbb{P}_{3}(z) =(z^2-r_2e^{i\vartheta}z+e^{2i\vartheta})(z-1)=z^3-(1+r_2e^{i\vartheta})z^2+(1+r_2e^{-i\vartheta})e^{2i\vartheta}z-e^{2i\vartheta},
\label{P3R}\end{eqnarray}
where $r_2 \in[0,2]$ and $\vartheta\in (-\pi,\pi]$. Each polynomial \eqref{P3R} possesses, by construction, a unitary ACM and, vice versa, the characteristic polynomial of any $3 \times 3$ unitary  matrix with a unit eigenvalue is a particular realization of \eqref{P3R}.

$[\mathbb{P}_{3}(z)]$ is a subset of the set $[\mathcal{P}_{3}(z)]$ of the polynomials of the form \eqref{upolychar}. This point is appreciated by noting that the coefficient of $z^2$ in Equation \eqref{upolychar} can always be represented as
\begin{eqnarray}\label{c1}
re^{i\theta_1}= (1+r_2e^{i\vartheta})
\end{eqnarray}
with 
\begin{eqnarray}
{r_2}^2=1+r^2-2r\cos{\theta_1}
\label{r2}\end{eqnarray}
and 
\begin{eqnarray}
\vartheta=\operatorname{Arg}{\left(-1+r e^{i\theta_1}\right)} =2\operatorname{Arctan}{\left(\frac{r\sin\theta_1}{-1+r\cos\theta_1+ \sqrt{1+r^2-2r\cos\theta_1}}\right)}.
\label{alf}\end{eqnarray}
 The last equality is based on the following identity \cite{armitage2006elliptic}, which gives the principal argument of a generic complex number $(x+iy)\in \Omega$, where $\Omega$ coincides with the complex plane cut along the negative $x$-axis:
\begin{eqnarray}\operatorname{Arg} (x+iy)= 2\operatorname{Arctan}\left(\frac{y}{x+\sqrt{x^2+y^2}}\right).
\label{formulaArg_1}\end{eqnarray}
As expected, this formula leaves the argument of a complex number of null modulus undefined and, for any fixed negative $x$, implies   
\begin{eqnarray}\label{lim}
   \lim\limits_{y\longrightarrow 0^{\pm} }\operatorname{Arg}(x+iy) =\pm \pi.
\end{eqnarray} 
The above equations clearly show that $[\mathbb{P}_{3}(z)]$ is obtained as a subset of $[\mathcal{P}_{3}(z)]$ by introducing the relations \eqref{c1} to \eqref{alf} among the parameters $r$, $\theta_1$, and $\theta$, which are arbitrary in their ranges of definition in the polynomial expression \eqref{upolychar}. In particular, $\theta$ is compatible with \eqref{P3R} only if
\begin{eqnarray}\label{alfa}
\theta=2\vartheta,
\end{eqnarray}
thus leading to the following:
\begin{thm} \label{necP3}
A monic third-degree polynomial \eqref{upolychar} belongs to the set $[\mathbb{P}_{3}(z)]$ if and only if it can be written in the form 
\begin{eqnarray}
\tilde{\mathbb{P}}_{3}(z)= z^3-(1+r_2e^{i{\frac{\theta}{2}}})z^2+(1+r_2e^{-i{\frac{\theta}{2}}} )e^{i\theta}z-e^{i\theta},
\label{pteta}\end{eqnarray} 
where $r_2\in [0,2]$ and $\theta\in (-\pi,\pi]$.
\end{thm}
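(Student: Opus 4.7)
The plan is to prove the two implications by exploiting the fact that a polynomial in $[\mathbb{P}_{3}(z)]$ has $1$ as one of its three roots, which allows us to factor out a linear term $(z-1)$ and reduce the problem to a statement about the remaining quadratic factor.

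For the "only if" direction, I would start from an arbitrary $\mathbb{P}_{3}(z)\in [\mathbb{P}_{3}(z)]$ written in the form \eqref{upolychar}. By the definition of $[\mathbb{P}_{3}(z)]$ and the construction preceding equation \eqref{P3R}, such a polynomial admits the factorization \eqref{P3R} for some $r_2\in[0,2]$ and $\vartheta\in (-\pi,\pi]$ (the quadratic factor coming from Theorem \ref{tp2} applied to the two non-unit roots, which have modulus one). Expanding that factorization and matching coefficients with \eqref{upolychar} forces in particular $e^{i\theta}=e^{2i\vartheta}$, i.e.\ the relation \eqref{alfa}, namely $\theta=2\vartheta$ in $(-\pi,\pi]$. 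Substituting $\vartheta=\theta/2$ back into \eqref{P3R} immediately yields the required form \eqref{pteta}.

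For the "if" direction, suppose a polynomial has the form \eqref{pteta}. A direct factorization check shows
\begin{equation*}
\tilde{\mathbb{P}}_{3}(z) = (z-1)\bigl(z^2 - r_2 e^{i\theta/2} z + e^{i\theta}\bigr),
\end{equation*}
which can be verified by expansion. The quadratic factor has exactly the structure \eqref{P2} with parameter $\vartheta=\theta/2\in (-\pi/2,\pi/2]\subset (-\pi,\pi]$ and $r_2\in[0,2]$, so by Theorem \ref{tp2} it is the characteristic polynomial of a $2\times 2$ unitary matrix, and its two roots therefore have modulus one. The linear factor $(z-1)$ trivially admits the $1\times 1$ unitary ACM $[1]$. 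Applying Theorem \ref{Pn} to the divisors $(z-1)$ and the quadratic factor (or, equivalently, just noting that all three roots now lie on the unit circle and forming the diagonal unitary matrix with those roots as entries) yields a unitary ACM for $\tilde{\mathbb{P}}_{3}(z)$. Since $1$ is a root by construction, $\tilde{\mathbb{P}}_{3}(z)\in [\mathbb{P}_{3}(z)]$.

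No step appears to present a serious obstacle: the whole argument is a bookkeeping exercise in factorization and coefficient matching, with Theorems \ref{Pn} and \ref{tp2} doing the substantive work. The only point requiring a little care is the consistency of angle ranges when identifying $\vartheta=\theta/2$ (so that the parameter $\vartheta$ extracted from $\theta\in(-\pi,\pi]$ indeed lies in the admissible range $(-\pi,\pi]$ of Theorem \ref{tp2}), and the observation that the identification $e^{i\theta}=e^{2i\vartheta}$ is what pins down the single algebraic relation \eqref{alfa} singling out $[\mathbb{P}_{3}(z)]$ inside the larger set $[\mathcal{P}_{3}(z)]$ of polynomials of the form \eqref{upolychar}.
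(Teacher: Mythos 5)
Your proof is correct and follows essentially the paper's own route: the ``only if'' direction reproduces the derivation preceding the theorem (factor out $(z-1)$, apply Theorem \ref{tp2} to the quadratic factor to obtain \eqref{P3R}, then match coefficients with \eqref{upolychar} to force \eqref{alfa}), while your ``if'' direction merely makes explicit, via the factorization of \eqref{pteta} together with Theorems \ref{Pn} and \ref{tp2}, the converse that the paper leaves implicit. One remark: the delicate angle-range point sits in the ``only if'' direction rather than in the ``if'' direction where your closing comment places it (there $\vartheta=\theta/2\in(-\pi/2,\pi/2]$ is automatic): $e^{i\theta}=e^{2i\vartheta}$ fixes $\theta$ only modulo $2\pi$, and when $2\vartheta\notin(-\pi,\pi]$ its reduction to the principal interval changes $e^{i\theta/2}$ by a sign, so the substitution $\vartheta=\theta/2$ with $r_2\geq 0$ requires exactly the principal-interval bookkeeping that the paper itself glosses over when it asserts that $\theta\in(-2\pi,2\pi]$ ``can be clearly reduced'' -- your argument is therefore at the same level of rigor as the paper's, no more and no less.
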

 
Note that the range of $r_2$ values is dictated by Theorem \ref{tp2}, and the corresponding range of $r$ values resulting from Equation \eqref{r2} for any given $\theta_1$ is a subset of the interval $[0,3]$ in Equation \eqref{upolychar}, in accordance with the fact that $[\mathbb{P}_{3}(z)]$ is a subset of $[\mathcal{P}_{3}(z)]$. On the other hand, since $\vartheta\in (-\pi,\pi]$, Equation \eqref{alfa} implies that $\theta\in (-2\pi,2\pi]$, which can be clearly reduced to the principal interval $[-\pi,\pi]$. We emphasize that requiring \eqref{upolychar} to be the characteristic polynomial of a unitary matrix with a real positive eigenvalue (that is, 1) entailed relations between the three parameters in Equation \eqref{upolychar}, thus leading to the dependence of the polynomial \eqref{pteta} on only two parameters, $r_2$ and $\theta$.

At this point, we can construct an ACM for a polynomial of the kind \eqref{pteta}. The polynomial factorization in Equation \eqref{P3R} enables a block diagonal form for the ACM, with the one-dimensional block simply equal to 1. The diagonal elements of the $2\times2$ block can be set equal \cite{Horn} and are immediately obtained from Vieta's formula for the sum of the roots of polynomial \eqref{P2}. Then, simple algebraic considerations lead to the following ACM of polynomial \eqref{pteta}:
\begin{equation}\label{Wtilde}
    \widetilde{W}_{3}= \left(
                     \begin{array}{ccc} 
  \frac{r_2}{2}e^{i\frac{\theta}{2}}\ &\sqrt{1-\left( \frac{r_2}{2}\right)^2}e^{i\frac{\theta}{2}}\ &0\\
  -\sqrt{1-\left(\frac{r_2}{2}\right)^2}e^{i\frac{\theta}{2}} \ & \frac{r_2}{2}e^{i\frac{\theta}{2}}\   &0\\
    0&0&  1\\
   \end{array}
     \right). \end{equation}
It is easy to verify that the characteristic polynomial of \eqref{Wtilde} coincides with \eqref{pteta} for all the allowed values of the parameters $r_2$ and $\theta$. The three columns of $\widetilde{W}_{3}$ are normalized and mutually orthogonal, and these properties imply the unitarity of matrix $\widetilde{W}_{3}$. If $ r_2>2$, and thus it is out of the range given in Theorem \ref{necP3}, the first two columns of the matrix are not orthogonal for any $\theta$, and then $\widetilde{W}_{3}$ is no longer unitary.

It is worth noting that \eqref{Wtilde} can be seen as the result of a partial diagonalization of another unitary matrix with the same eigenvalues and that all the other ACMs of a polynomial \eqref{pteta} can be obtained by unitary transformation of \eqref{Wtilde}.

Once an ACM is constructed for any polynomial \eqref{pteta}, the subset of $[\mathcal{P}_{3}(z)]$ that contains all and only the polynomials \eqref{upolychar} admitting a unitary ACM can be generated by rotating the roots of each polynomial \eqref{pteta} by an angle $\epsilon\in (-\pi,\pi]$. This amounts to changing the complex variable $z$ to $u=ze^{i\epsilon}$ in $\tilde{\mathbb{P}}_{3}(z)$. Then, up to a global phase factor $e^{3i\epsilon}$, the polynomial $\tilde{\mathbb{P}}_{3}(u)=\tilde{\mathbb{P}}_{3}(ze^{i\epsilon})$ is equal to
\begin{eqnarray}
\mathbb{P}_{3\epsilon}(z)=
z^3-(1+r_2e^{i{\frac{\theta}{2}}})e^{-i\epsilon}z^2+(1+r_2e^{-i{\frac{\theta}{2}}} )e^{i\theta}e^{-2i\epsilon}z-e^{i(\theta-3\epsilon)}.
\label{3RR}\end{eqnarray}
The root 1 of $\tilde{\mathbb{P}}_{3}(z)$ corresponds to a general complex root of modulus one, $e^{i\epsilon}$, of $\mathbb{P}_{3\epsilon}(z)$.
$[\mathbb{P}_{3\epsilon}(z)]$ includes all and only the polynomials $\mathbb{P}_{3\epsilon}(z)$ which, by construction, admit a unitary ACM and therefore also satisfy the necessary condition  expressed by Theorem \ref{necP3}. 
We have thus proved that
\begin{thm} \label{tp3}
A complex monic polynomial of third degree is the characteristic polynomial of a unitary matrix of order 3 if and only if it has the structure \eqref{3RR},  
with $r_2\in [0,2]$, $\theta\in (-\pi,\pi]$, and  $\epsilon\in (-\pi,\pi]$.
\end{thm}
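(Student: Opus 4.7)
The plan is to obtain both implications from Theorem \ref{necP3} by coupling it with a global rotation of the eigenvalues, implemented at the matrix level by multiplication by a unit-modulus scalar. The construction of the Hermitian-case ACM in Section \ref{sec:cubic} relied on a translation of the complex variable; here, because translations destroy unitarity, the appropriate symmetry to exploit is a multiplicative rotation instead.

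First I would handle sufficiency. Suppose $P(z)$ has the form \eqref{3RR} for some $r_2 \in [0,2]$, $\theta \in (-\pi,\pi]$ and $\epsilon \in (-\pi,\pi]$. Starting from the unitary ACM $\widetilde{W}_3$ of $\tilde{\mathbb{P}}_3(z)$ given in \eqref{Wtilde}, set $M_\epsilon := e^{-i\epsilon}\widetilde{W}_3$. Since $|e^{-i\epsilon}|=1$, $M_\epsilon$ is the product of a unit-modulus scalar and a unitary matrix and is therefore itself unitary. The factorization
\begin{equation*}
\det(zI_3 - M_\epsilon)= e^{-3i\epsilon}\det\bigl(z e^{i\epsilon}I_3 - \widetilde{W}_3\bigr)= e^{-3i\epsilon}\tilde{\mathbb{P}}_3(ze^{i\epsilon})= \mathbb{P}_{3\epsilon}(z)
\end{equation*}
then shows that $M_\epsilon$ is a unitary ACM of $P(z)$, so sufficiency is immediate.

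For necessity, let $U$ be any $3 \times 3$ unitary matrix and let $P(z)$ be its characteristic polynomial. All three eigenvalues of $U$ lie on the unit circle, so I may pick one of them and write it as $e^{i\epsilon_0}$ with $\epsilon_0 \in (-\pi,\pi]$. The rescaled matrix $V := e^{-i\epsilon_0} U$ is then unitary and has $1$ among its eigenvalues. By the very definition of $[\mathbb{P}_3(z)]$ the characteristic polynomial of $V$ belongs to that set, and Theorem \ref{necP3} therefore forces it to have the form \eqref{pteta} for some $r_2 \in [0,2]$ and $\theta \in (-\pi,\pi]$. Undoing the scalar rescaling with the same determinant identity as above gives $P(z) = e^{3i\epsilon_0}\tilde{\mathbb{P}}_3(ze^{-i\epsilon_0})$, which is exactly \eqref{3RR} with $\epsilon = -\epsilon_0$; since $-\epsilon_0$ runs through the same interval $(-\pi,\pi]$ as $\epsilon_0$ up to the inconsequential boundary point, the parameter falls in the range stated by the theorem.

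The main obstacle I foresee is not conceptual but the careful parameter bookkeeping: confirming that the substitution $u = z e^{i\epsilon}$ truly yields a monic polynomial of the exact form \eqref{3RR} (the division by $e^{3i\epsilon}$ in the sufficiency calculation), and checking that the ranges of $(r_2,\theta,\epsilon)$ remain tight and that the freedom to pick any of the three unit-modulus eigenvalues of $U$ in the necessity step does not change the admissible parameter domain. Beyond this accounting, the argument is a short chain of invocations of Theorem \ref{necP3} and the elementary observation that $e^{-i\epsilon} I_3$ is unitary.
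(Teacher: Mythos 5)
Your proof is correct and follows essentially the same route as the paper: sufficiency via the unitary matrix $e^{-i\epsilon}\widetilde{W}_{3}$ together with the determinant identity behind \eqref{WWtilde}, and necessity by rotating an eigenvalue of a given unitary matrix to $1$ and invoking Theorem \ref{necP3}. You merely make explicit (eigenvalue selection, rescaling $V=e^{-i\epsilon_0}U$, and the boundary case $\epsilon_0=\pi$) what the paper states more informally as "rotating the roots of each polynomial \eqref{pteta}," so no substantive difference or gap remains.
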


Next, we accomplish the main objective of this section by constructing an ACM $W_3$ of $\mathbb{P}_{3\epsilon}(z)$. To this end, in analogy with the recipe \eqref{Q3Q3c}, we use the transformation $z=ue^{-i\epsilon}$ to generate $W_3$ from the matrix $\widetilde{W}_{3}$ in Equation \eqref{Q3Q3c} by proceeding as follows:
\begin{eqnarray}\tilde{\mathbb{P}}_{3}(u)
 &\equiv& det(u I_3-\widetilde{W}_{3})=det(e^{i\epsilon}(z I_3-e^{-i\epsilon}\widetilde{W}_{3} ))\nonumber\\&=& e^{3i\epsilon}det(zI_3- {W}_{3}) \equiv e^{3i\epsilon} \mathbb{P}_{3\epsilon}(z),\label{WWtilde}\end{eqnarray}
 where $W_3\equiv e^{-i\epsilon}\widetilde{W}_{3}$.
In light of our previous arguments, the characteristic polynomial of the matrix 
\begin{equation}\label{WW}
W_3=e^{-i\epsilon}   \widetilde{W}_{3}= \left(
                     \begin{array}{ccc} 
  \frac{r_2}{2}e^{i(\frac{\theta}{2}-\epsilon})\ &\sqrt{1-\left( \frac{r_2}{2}\right)^2}e^{i(\frac{\theta}{2}-\epsilon})\ &0\\
  -\sqrt{1-\left(\frac{r_2}{2}\right)^2}e^{i(\frac{\theta}{2}-\epsilon})  \ & \frac{r_2}{2}e^{i(\frac{\theta}{2}-\epsilon}) \   &0\\
    0&0& e^{-i\epsilon} \\
   \end{array}
     \right),  \end{equation} 
is the polynomial \eqref{3RR}. The unitary matrix \eqref{WW} fully  meets our goal. To reach this objective,  we first characterized the class $[\mathbb{P}_{3\epsilon}(z)]$ of all and only the polynomials that admit a unitary ACM, thus removing the difficulties related to the lack of sufficiency of polynomial \eqref{upolychar}. Then, exploiting the recipe \eqref{WWtilde},  we established the form $W_{3}$ of a unitary ACM  for any polynomial belonging to $[\mathbb{P}_{3\epsilon}(z)]$.

In the following, we illustrate our approach through some applications.
\subsection{Examples}
\begin{itemize}
\item Given the parameter $\sigma=\pm 1$, consider the subclass of polynomials  \eqref{3RR} with $\epsilon= \frac{(1-\sigma)\pi}{2}$: \begin{eqnarray}
\mathbb{P}_{3\sigma'}(z)&=&
z^3-(1+r_2e^{i{\frac{\theta}{2}}})e^{-i\frac{(1-\sigma)\pi}{2}}z^2+(1+r_2e^{-i{\frac{\theta}{2}}} )e^{i\theta} z-e^{i(\theta- \frac{(1-\sigma)\pi}{2})}\nonumber\\&=&z^3-(1+r_2e^{i{\frac{\theta}{2}}}) \sigma z^2+(1+r_2e^{-i{\frac{\theta}{2}}} )e^{i\theta} z-\sigma e^{i\theta}, 
\label{3sigma'}\end{eqnarray} 
In Equation \eqref{3sigma'} we exploited the identity $e^{-i\frac{(1-\sigma)\pi}{2}}=\sigma$. It is easy to verify that $\mathbb{P}_{3\sigma}(\sigma)=0$, which means that \eqref{3sigma'} is the most general polynomial with the real root $\sigma$ and an ACM which, in view of  \eqref{WW}, can be written on demand as
\begin{eqnarray}
W_{3\sigma}=\left(
                     \begin{array}{ccc} 
  \sigma\frac{r_2}{2}e^{i\frac{\theta}{2}}\ &\sigma\sqrt{1-\left( \frac{r_2}{2}\right)^2}e^{i\frac{\theta}{2}}\ &0\\
  -\sigma\sqrt{1-\left(\frac{r_2}{2}\right)^2}e^{i\frac{\theta}{2}} \ & \sigma\frac{r_2}{2}e^{i\frac{\theta}{2}}\   &0\\
    0&0& \sigma \\
   \end{array}
     \right).  \label{Wsigma}\end{eqnarray}  
\item
Consider the polynomial
\begin{eqnarray}
\mathbb{P}_{3,r_2=0}(z)&=&
z^3-e^{i\theta_1}z^2+ e^{i\theta'}e^{-i\theta_1} z-e^{i\theta'}, 
\label{r2=0}\end{eqnarray}
obtained by setting $r_2=0$, $\epsilon=-\theta_1$, and $\theta=\theta'-3\theta_1$ in Equation \eqref{3RR}. With these choices, using \eqref{WW} we immediately find that the polynomial \eqref{r2=0} admits the ACM
\begin{eqnarray}
W_{r_2=0}=\left(
                     \begin{array}{ccc} 
   0\ & e^{i\frac{\theta'-\theta_1}{2}}\ &0\\
  - e^{i\frac{\theta'-\theta_1}{2}} \ &  0\   &0\\
    0&0&  e^{i\theta_1} \\
   \end{array}
     \right).  \label{Wr_2=0}\end{eqnarray}
 
When $r=1$, the polynomial \eqref{r2=0} coincides with the polynomial \eqref{upolychar} up to a trivial change of notation ($\theta$ is substituted with $\theta'$). This polynomial is then the characteristic polynomial of a unitary matrix for any $\theta'$ and $\theta_1$, as we observed in Subsection \ref{U41}, with eigenvalues that are now immediately derived from the matrix \eqref{Wr_2=0} as $e^{i\theta_1}$ and $e^{i\frac{\theta'-\theta_1\pm\pi}{2}}$.
 \item 
For $r= 0$, \eqref{upolychar} yields the polynomial $z^3-e^{i\theta}=0$ and, whatever the $\epsilon$ value, Equations \eqref{c1} and \eqref{alfa} imply that $r_2=1$, and $\theta=2\pi$ in Equation \eqref{3RR}. The corresponding ACM, with structure \eqref{WW}, takes the form
  \begin{eqnarray}
W_{r=0}=\left(
                     \begin{array}{ccc} 
  \frac{1}{2}e^{i(\pi-\epsilon)}\ &\sqrt{  \frac{3}{4}}e^{i(\pi-\epsilon)}\ &0\\
  -\sqrt{  \frac{3}{4}}e^{i(\pi-\epsilon)}\ & \frac{1}{2}e^{i(\pi-\epsilon)}  \ &0\\
    0&0&  e^{-i\epsilon}  \\
   \end{array}
     \right),  \label{Wsigma_1}\end{eqnarray}
Apart from the cases $r\neq 1$ or $r\neq 0$ examined above, for a generic $r$ Equation \eqref{upolychar} admits a unitary ACM only under $r$-dependent algebraic constraints on $\theta$ and $\theta_1$. These constraints are realized in the polynomial form \eqref{3RR} through Equations \eqref{c1} to \eqref{alf} and \eqref{alfa}, for the ranges of parameter values defined in Theorem \ref{tp3}.
 \item
These conditions are not all satisfied by the polynomial 
\begin{eqnarray}
\mathbb{P}_{3,r=2}(z)&=&
z^3-2e^{i \pi}e^{-i\epsilon}z^2+ 2e^{i \pi}e^{-2i\epsilon}  z-e^{-3i\epsilon},
\label{r=2}\end{eqnarray} 
which is of the form \eqref{upolychar} with $r=2$, $\theta_1= -\epsilon +\pi$ and $\theta=2\pi-3\epsilon$. It is easy to see that this polynomial coincides with 
\begin{eqnarray}
\mathbb{P}_{3,r=2}(z)&=&
z^3-(1+3e^{i \pi})e^{-i\epsilon}z^2+ (1+3e^{-i \pi})  e^{-2i\epsilon}z-e^{-3i\epsilon}, 
\label{alternativer=2}\end{eqnarray}
which has the form \eqref{3RR} with $\theta=2\pi$, except for the fact that $r_2\notin [0,2]$, as is instead required in Theorem \ref{tp3} because of Theorem \ref{tp2}. As a consequence, the polynomial \eqref{alternativer=2} or, equivalently, \eqref{r=2} cannot be the characteristic polynomial of a unitary $3\times 3$ matrix.
\item
Polynomial \eqref{r=2} obviously admits an ACM of the form \eqref{Q3_1}, which is obtained using the general protocol in Section \ref{sec:cubic}. Moreover, the insertion of $r_2=3$ and $\theta= 2\pi$ in matrix \eqref{WW} leads to another ACM of polynomial \eqref{r=2} of the form
\begin{equation}\label{WWNU}
W_{3nu}=  \left(
                     \begin{array}{ccc} 
  -\frac{3}{2}e^{-i\epsilon}\ &-\frac{i}{2}\sqrt{5}e^{-i\epsilon}\ &0\\
   \frac{i}{2}\sqrt{5}e^{-i\epsilon}  \ & -\frac{3}{2}e^{-i\epsilon} \   &0\\
    0&0& e^{-i\epsilon} \\
   \end{array}
     \right),  \end{equation} 
which is manifestly non unitary.
In general, $W_{3nu}$ and the ACM of form \eqref{Q3_1} are not similar. A sufficient condition for their similarity is that their three common complex eigenvalues are distinct. In this case, in fact, both matrices are surely diagonalizable \cite{Horn} and therefore traceable (in general, not unitarily) to the same diagonal matrix.
\end{itemize}

\section{Discussion and Conclusions }\label{discussion_conclusion}
Finding a matrix with an assigned characteristic polynomial is a classic inverse problem solved by Frobenius a long time ago. In Section ~\ref{inverse problem} we observed that there are infinitely many other solutions, generally not equivalent to the one found by Frobenius. The exhaustive description of the set $S(P_n(z))$ of all complex matrices sharing the same characteristic polynomial is still an open problem. Among the reasons for the missing solution of this problem it is useful to consider that, even if $S(P_n(z))$ is invariant under similarity transformations, it includes non-similar (and in general not even equivalent) matrices. Another problem is, for example, that the structure of the non-empty subset of non-sparse CMs belonging to $S(P_n(z))$, unlike the set of sparse CMs \cite{deaett2019non}, has not yet been fully characterized.  
 
A related inverse problem, stimulated by applications of current interest to both physicists and mathematicians, is the search for ACMs \textit{constrained} to possess prescribed structural properties, such as, for example, unitarity,  positive semidefiniteness, or Hermiticity.

The first focus of this study is the construction of a new ACM of a generic monic and complex third degree polynomial $P_3(z)$, characterized by versatility for applications. This objective is pursued by parameterizing the elements of the ACM in such a way that they lend themselves to additional constraints dictated by specific problems (of which only the structural properties are exploited).
To the best of our knowledge, our investigation of inverse problems of this kind opens a new fruitful chapter in this research area whose central goal is the proposal of new ACMs which, in particular, can be CMs. We address the three above-mentioned constrained inverse problems, providing methodology and results that aim for broad applicability and are potentially transferable to solving analogous problems involving higher degree polynomials. 

\par The adopted step-by-step approach builds new specific classes of unconstrained or constrained matrices as ACMs of suitably given polynomials, relaxing from the outset any condition on the degree of the minimal polynomial. In particular, the strategy implemented, as well as the mathematical tools used, is not influenced by any FCM-based or FCM-inspired technique.
\par The elements of the ACM that we construct as a solution of the general inverse problem are single-valued complex functions of the coefficients of the given generic polynomial. Exploiting the structural properties of this ACM, we find the algebraic expressions for the three, generally complex, roots of its complex characteristic polynomial.

It is remarkable that, when the polynomial becomes real, the associated general ACM smoothly becomes Hermitian under easy to find necessary and sufficient conditions on the coefficients of the polynomial described by \eqref{1629}. Using simply the fact that the ACM becomes Hermitian if and only if $\Delta\leq 0$, we are able to extend the trigonometric representation of the three roots of the real polynomial to all possible cases, that is even when \eqref{deltanonpositivo} does not hold. This representation is obtained without resorting to the well-known Cardano-Del Ferro formulas.

\par We emphasize that the FCM of a characteristic polynomial does not undergo any structural change when the coefficients of the complex polynomial become real, thus providing no additional information on the polynomial roots. For this reason, we claim that the FCM has a lower flexibility than our ACM. 
We show how to use this flexibility to obtain an ACM of a prescribed characteristic polynomial on demand, applying our procedure to the important problem of finding a density matrix and, in particular, that of a qutrit.  

\par A second, novel constrained inverse problem addressed here consists in finding a unitary ACM of a generic polynomial with three roots of modulus one.
Excluding the trivial case in which the unitary ACM can be directly given in diagonal form, we first reach the intermediate goal (interesting in itself) of parameterizing $c_1$, $c_2$, and $c_3$ in the polynomial \eqref{cubic} so as to set the necessary conditions on the structure of a polynomial to admit an ACM. By setting appropriate relations between the parameters through coupled geometric and analytical considerations, we further constrain the polynomial structure in such a way as to identify the set $[\mathbb{P}_{3\epsilon}(z)]$ of all and only the third-degree polynomials that admit a unitary ACM. Then, we conclude our analysis by constructing the associated ACM.

\par The results of this study can be further explored and usefully applied to physical and  mathematical contexts, including, at their intersection, the research area of quantum computing. For example, for time-dependent parameters, the prescription of a time-dependent characteristic polynomial of $[\mathbb{P}_{3}(z)]$ or $[\mathbb{P}_{3\epsilon}(z)]$ leads to a unitary time-dependent matrix, hence to the pertinent time-dependent Hamiltonian that generates the time evolution of a qutrit, whose properties can thus be traced back to the prescription of $\mathbb{P}{_3}[z]$.
In mathematical contexts, the analysis developed in this study can be extended to polynomials of higher degree. 
For example, in the case of a fifth-degree polynomial, our protocol may provide conditions on the coefficients of the polynomial such that its roots are real.
The rich analysis enabled by the use of third-degree polynomials in this study sets a clearer basis to conceive the extension of the analysis to higher degree polynomials.

\section*{Funding}
L.M. acknowledge funding from the NWO Gravitation Program Quantum Software Consortium. A.M. (Agostino Migliore) acknowledges funding from the European Union - NextGenerationEU, within the National Center for HPC, Big Data and Quantum Computing (Project no. CN00000013, CN1 Spoke 10: "Quantum Computing").
\section*{Acknowledgement}
L.M. thanks Prof. Maria Carmela Lombardo for her invitation at the University of Palermo for scientific collaboration.

\appendix
\section[\appendixname~\thesection]{Derivation of Equations \eqref{rootspg01} to \eqref{rootspl04}}
\numberwithin{equation}{section}
\renewcommand{\theequation}{\thesection\arabic{equation}}
In this appendix, we find the roots of the polynomial \eqref{cubic} in the case in which
\begin{eqnarray}
	\Delta(p,q)\equiv\frac{p^3}{27}+\frac{q^2}{4}\geq 0,
	\label{A1}
\end{eqnarray}
From Equation \eqref{FI13real} it is immediately seen that the inequality \eqref{A1} implies $\chi^2<1$ or $\chi^2>1$ depending on whether $p>0$ or $p<0$, respectively. As a consequence, Equations \eqref{FI13}, \eqref{magicroot}, and \eqref{etapmroot} lead to different polynomial roots depending on the sign of $p$, in agreement with previous analyses \cite{Tricomi}.

We begin with considering the case $p>0$, which means $\Theta_p=0$, whence $\varphi_p=\pi$ according to Equation \eqref{pefi} and $\chi$ is given by Equation \eqref{rootspg06}, that is,
\begin{equation} \chi=-iu, \quad u=\frac{3q}{2p}\sqrt{\frac{3}{p}}. \end{equation}
In this case, $\operatorname{Arg}(1-\chi^2)=0$ and Equation \eqref{FI13} gives
\begin{eqnarray}
	\Phi_{13}&=&\operatorname{Arg}\left(\chi+i\sqrt{1-\chi^2}\right)
	-i\operatorname{Ln}\left(\chi+ i\sqrt{1-\chi^2}\right) \nonumber\\ 
	&=&\frac{\pi}{2}-i\operatorname{ln}\left(\sqrt{1+u^2}-u\right)
	=\frac{\pi}{2}+i\operatorname{ln}\left(\sqrt{1+u^2}+u\right).
\end{eqnarray}
Then, using Euler's formula and rationalization, Equation \eqref{magicroot} easily yields
\begin{equation}
\tilde\eta_1=\frac{\sqrt{3}}{2}(A+B)+\frac{i}{2}(A-B)
\end{equation}
with
\begin{eqnarray}
	A=\sqrt[3]{\sqrt{1+u^2}-u}, \quad B=\sqrt[3]{\sqrt{1+u^2}+u},
\end{eqnarray}
whence
\begin{eqnarray}
	z_1=i\sqrt{\frac{p}{3}}\tilde\eta_{1}-\frac{c_1}{3}
	=\frac{\sqrt{p}}{2}\left[\frac{1}{\sqrt{3}}(B-A)+i(A+B)\right]-\frac{c_1}{3},
\end{eqnarray}
namely Equation \eqref{rootspg01}, being $u=i\chi$ and
\begin{eqnarray}
	X=A+B, \quad Y=\frac{B-A}{\sqrt{3}},
\end{eqnarray}
Considering that $AB=1$ and using Equation \eqref{etapmroot}, after some lengthy algebra one obtains
\begin{eqnarray}
	\tilde\eta_{k}&=&\frac{\sqrt{3}}{2}\left[-\frac{A+B}{2}+i\frac{B-A}{2\sqrt{3}}
	+(-1)^{k+1}\sqrt{1-\frac{(B-A)^2}{2}+i\frac{\sqrt{3}(B^2-A^2)}{2}}\right]
	\nonumber\\&=&\frac{\sqrt{3}}{2}\left[-\frac{X}{2}+i\frac{Y}{2}
	+\frac{(-1)^{k+1}}{2}\sqrt{(X+3iY)^2}\right], \quad k=2,3.
\end{eqnarray}
Choosing the root in the last expression in accordance with Equation \eqref{etapmroot}, we thus have
\begin{equation} \tilde\eta_{2}=-\frac{\sqrt{3}}{2}(X+iY) \end{equation}
and
\begin{equation} \tilde\eta_{3}=i\sqrt{3}Y, \end{equation}
from which the root expressions \eqref{rootspg02} and \eqref{rootspg03} immediately result.

Consider, for example, the polynomial $z^3+4z-7\sqrt{3}=0$, which is directly given in canonical form, so that $c_1=0$, $c_2=p$, and $c_3=q$. In this case, $u=-7\frac{9}{16}$, whence $A=2$, $B=\frac{1}{2}$ and therefore $X=\frac{5}{2}$, $Y=-\frac{\sqrt{3}}{2}$. Equations \eqref{rootspg01} to \eqref{rootspg03} thus give 
$z_1=\frac{1}{2}(-\sqrt{3}+5i)$,$z_2=-\frac{1}{2}(\sqrt{3}+5i)$, and $z_3=\sqrt{3}$, which are readily verified to be the roots of the above polynomial.

\par For $p<0$, we have $\Theta_p=0$, from which $\varphi_p=2\pi$, and
\begin{equation} \chi=-\frac{3q}{2p}\sqrt{-\frac{3}{p}}. \end{equation}
Therefore, $\operatorname{Arg}(1-\chi^2)=\pi$ and Equation \eqref{FI13} gives
\begin{eqnarray}
	\Phi_{13}&=&\operatorname{Arg}\left(\chi-\sqrt{\chi^2-1}\right)
	-i\operatorname{ln}\left(\chi-\sqrt{\chi^2-1}\right) \nonumber\\ 
	&=&i\operatorname{ln}\left(\chi+\sqrt{\chi^2-1}\right)
	\equiv i\nu.
\end{eqnarray}
Using Euler's formula again, we obtain
\begin{eqnarray}
\tilde\eta_1=2\cos\left(i\frac{\nu}{3}\right)
=2\cosh\left(\frac{\nu}{3}\right)=C
\end{eqnarray}
with $C$ given by Equation \eqref{rootspl03}, from which Equation \eqref{rootspl01} immediately follows.

Since
\begin{eqnarray}
\sin^2(i\frac{\nu}{3})=-\sin^2(\frac{\nu}{3})
=1-\cosh^2(\frac{\nu}{3})=1-\frac{C^2}{4}<0 \quad \forall \nu,
\end{eqnarray}
Equation \eqref{etapmroot} gives
\begin{eqnarray}
	\tilde\eta_{k}&=&-\left[\frac{C}{2}
	+i(-1)^k\sqrt{3\left(\frac{C^2}{4}-1\right)}\right], \quad k=2,3,
\end{eqnarray}
whence Equation \eqref{rootspl02}.

\bibliographystyle{unsrt} 
\bibliography{fer_12}
\end{document}